\newtheorem{theorem}{Theorem}
\newtheorem{conjecture}[theorem]{Conjecture}
\newtheorem{lemma}[theorem]{Lemma}
\newtheorem{proposition}[theorem]{Proposition}
\newtheorem{definition}[theorem]{Definition}
\theoremstyle{break}
\newtheorem{remark}[theorem]{Remark}
\newenvironment{proof}{\paragraph{Proof.}}{\hfill$\square$}
\title{Optimal Patrolling Strategies for \\Trees and Complete Networks}
\date{}
\author{Thuy Bui\thanks{Rutgers Business School, 1 Washington Park, Newark, NJ 07102, USA, tb680@business.rutgers.edu} \and Thomas Lidbetter\thanks{Department of Engineering Systems and Environment, University of Virginia, VA 22903, USA, tlidbetter@virginia.edu} \thanks{Rutgers Business School, 1 Washington Park, Newark, NJ 07102, USA, tlidbetter@business.rutgers.edu}}
\providecommand{\keywords}[1]{\textbf{\textbf{Keywords:}} #1}
\begin{document}
	
\maketitle

\begin{abstract}
\noindent We present solutions to a continuous patrolling game played on network. 
In this zero-sum game, an Attacker chooses a time and place to attack a network for a fixed amount of time. 
A Patroller patrols the network with the aim of intercepting the attack with maximum probability. 
Our main result is the proof of a recent conjecture on the optimal patrolling strategy for trees. 
The conjecture asserts that a particular patrolling strategy called the {\em $E$-patrolling strategy} is optimal for all tree networks. 
The conjecture was previously known to be true in a limited class of special cases. 
The $E$-patrolling strategy has the advantage of being straightforward to calculate and implement.
We prove the conjecture by presenting $\varepsilon$-optimal strategies for the Attacker which provide upper bounds for the value of the game that come arbitrarily close to the lower bound provided by the $E$-patrolling strategy.
We also solve the patrolling game in some cases for complete networks.
\end{abstract}

\keywords{patrolling, zero-sum games, networks}

\newpage
\section{Introduction}
In the continuous patrolling game, introduced by \cite{alpern2016patrolling}, an Attacker picks a point on a network $Q$ and a time interval of fixed duration during which to carry out an attack. A Patroller moves on the network at unit speed and intercepts the attack (and wins the game) if she reaches the attacked point during the attack interval. \cite{alpern2022continuous} proposed a mixed strategy for the Patroller, called the  {\em $E$-patrolling strategy}, which was shown to be optimal for certain classes of tree networks. In Conjecture~1 of that paper, they suggested that the $E$-patrolling strategy was optimal for all trees. We refer to this conjecture as the {\em tree patrolling conjecture}. In this paper we settle the tree patrolling conjecture by proving that the $E$-patrolling strategy is optimal for all tree networks. We also solve the game in certain cases for complete networks (those for which every pair of nodes is connected by precisely one arc).

The key idea we use to prove the conjecture for trees is that as long as the Attacker randomizes over a large enough time period, there are mixed strategies that are arbitrarily close to being optimal that simply pick the time of the attack uniformly over that period. 
This means that we need only specify a distribution over the network $Q$.
We define a mixed strategy for the Attacker that is played over a large time interval $[0,T]$ and show that for any given $\varepsilon >0$, this strategy is $\varepsilon$-optimal for large enough $T$.

Most work in the area of patrolling games focuses on discrete models, such as \cite{ALP}, \cite{AMP},  \cite{Lin13}, \cite{ARMOR}, \cite{Yolmeh} and \cite{Zoroa12}. A disadvantage of discrete models is that in many real world examples of patrolling, an attack or infiltration can occur anywhere continuously along a border, boundary or network. Discrete models also assume that attacks occur at discrete times, but of course it is more realistic to model time as continuous. This was the motivation behind the continuous patrolling game introduced by \cite{alpern2016patrolling}. As well as the recent work of \cite{alpern2022continuous} on the game, \cite{garrec2019continuous} has also made some important contributions, including establishing that the game has a value and optimal (or $\varepsilon$-optimal) strategies. \cite{Lin19} studied a different continuous patrolling game on a perimeter.

The layout of the paper is as follows. In Section~\ref{sec:bground}, we recall the definition of the continuous patrolling game and give some background on previous work on the game. We also describe the tree patrolling conjecture precisely. In Section~\ref{sec:roots}, we work towards defining a decomposition of any tree $Q$ which we call its {\em subtree decomposition}. This decomposition consists of a set of subtrees of length at most $\alpha/2$ containing all the leaf nodes and another connected set we call the {\em core}. We also define the concept of the {\em density} of a subset of a network, which, for a given Attacker strategy, is defined as the ratio of the probability the attack takes place in that subset to the length of the subset. This definition is analogous to the concept of {\em search density}, which is well known in the field of {\em search games}. The concept originates from the work of \cite{gal79}, but has been used more recently in, for example, \cite{alpern2013search}, \cite{fokkink19} and \cite{hermans22}. The ideas of density and the subtree decomposition are crucial for us to define in Section~\ref{sec:main} the Attacker strategy that we proceed to show is $\varepsilon$-optimal. In Section~\ref{sec:complete} we solve the game on complete networks for some values of $\alpha$. In Section~\ref{sec:concl} we conclude.

The significance of our main result on trees lies in the fact that the $E$-patrolling strategy is intuitive and easy to implement. Roughly speaking, the Patroller repeatedly tours the network, but performs extra tours of subtrees of the network that are close to the leaf nodes. 

\section{Background and Definitions} \label{sec:bground}

In this section we make some definitions and give some more background to the continuous patrolling game. We finish the section by stating the tree patrolling conjecture precisely.

We start by defining a network $Q$ in a little more detail, though we refer the reader to \cite{alpern2022continuous} for a precise definition. A network $Q$ is given by a multigraph whose arcs can be viewed as open intervals. The length of an arc $a$ is denoted $\lambda(a)$, and $\lambda$ is extended to define a measure on $Q$. At each end of an arc is a node, and we refer to points of $Q$ that are not nodes as {\em regular}. We also define a metric $d$ on $Q$, where $d(x,y)$ is the length of the shortest path between two points $x,y \in Q$. 

In the continuous patrolling game on $Q$, the Attacker picks a point $x \in Q$ and a time $t \ge 0$ at which to start the attack.  
The attack lasts for time $\alpha$, where $\alpha>0$ is some parameter of the problem known to both players, and is no greater than the minimum tour time of $Q$.
The Patroller picks a patrol of the network, which is given by a unit speed path $S:[0,\infty) \rightarrow Q$. 
If the patrol intercepts the attack, then the Patroller wins the game. 
More precisely, the payoff of the game is equal to~1 if $x \in S([t, t+\alpha])$, otherwise the payoff is 0. 
The Patroller is the maximizer and the Attacker is the minimizer.

As mentioned in the Introduction, the continuous patrolling game was introduced in \cite{alpern2016patrolling}. 
\cite{garrec2019continuous} later proved that this zero-sum game has a value; moreover that the Patroller has optimal mixed strategies and the Attacker has $\varepsilon$-optimal mixed strategies (that is strategies that ensure the expected payoff is within $\varepsilon$ of the value of the game, for any $\varepsilon>0$). Garrec also found optimal strategies in the game in some special cases, as did \cite{alpern2016patrolling}.

\cite{alpern2022continuous} solved the game in some further special cases. 
Firstly, they gave a solution for arbitrary networks as long as $\alpha$ is shorter than the length of any arc of the network.  
Secondly, they gave a solution for tree networks when $\alpha$ is such that a particular condition called the {\em  Leaf Condition} is satisfied. 
They defined a patrolling strategy called the $E$-patrolling strategy, and showed that it is optimal for trees that satisfy the Leaf Condition. 
They conjectured that the $E$-patrolling strategy is optimal for all tree networks (the tree patrolling conjecture). They verified their conjecture for a class of star networks consisting of one long arc and an arbitrary number of short arcs of equal length. They also verified it for one particular example of a tree network that is not a star and does not satisfy the Leaf Condition. 

Generally speaking, the Leaf Condition is satisfied when $\alpha$ is particular small and, in the case of star networks, also when it is particularly large. This leaves a sizeable gap of values of $\alpha$ for which the optimality of the $E$-patrolling strategy was unproven.
In Section~\ref{sec:main}, we settle the tree patrolling conjecture. 

Of crucial importance to stating and proving the tree patrolling conjecture, we must first define the {\em extremity set} $E$ for a tree network $Q$.

Let $Q$ be a tree network of length $\mu$. For any set of points $Y$, we denote $Y^c$ for $Q-Y$ and $\overline{Y}$ for the topological closure of $Y$. 
If $x$ is a regular point of $Q$, then $Q-\{x\}$ has two components $Q_1(x)$ and $Q_2(x)$ such that $\lambda(Q_1(x))+\lambda(Q_2(x))=\mu$, and $\text{min}_{i=1,2}Q_i(x)\leq \mu/2$. If $x$ is a node of degree $n$ ($n \geq 3$), then $Q-\{x\}$ has $n$ components.  
\begin{definition}
Let $Q$ be a tree. The extremity set $E \equiv E(Q,\alpha)$ is defined as the set of all regular points $x \in Q$ such that $\text{min}_{i=1,2}\lambda(Q_i(x)) <\alpha/2$.
\end{definition}

Although it is convenient to define $E$ as an open set, we will largely work with its topological closure $\overline{E}$. In Figure~\ref{fig1} we depict the set $\overline{E}$ in red for various values of $\alpha$ on a specific tree network $Q$ of length $\mu=10$. Note that $\overline{E}=Q$ for $\alpha \ge 8$, and it is easy to see that in fact for any tree network $Q$, we have $\overline{E}=Q$ for all $\alpha>\mu$.

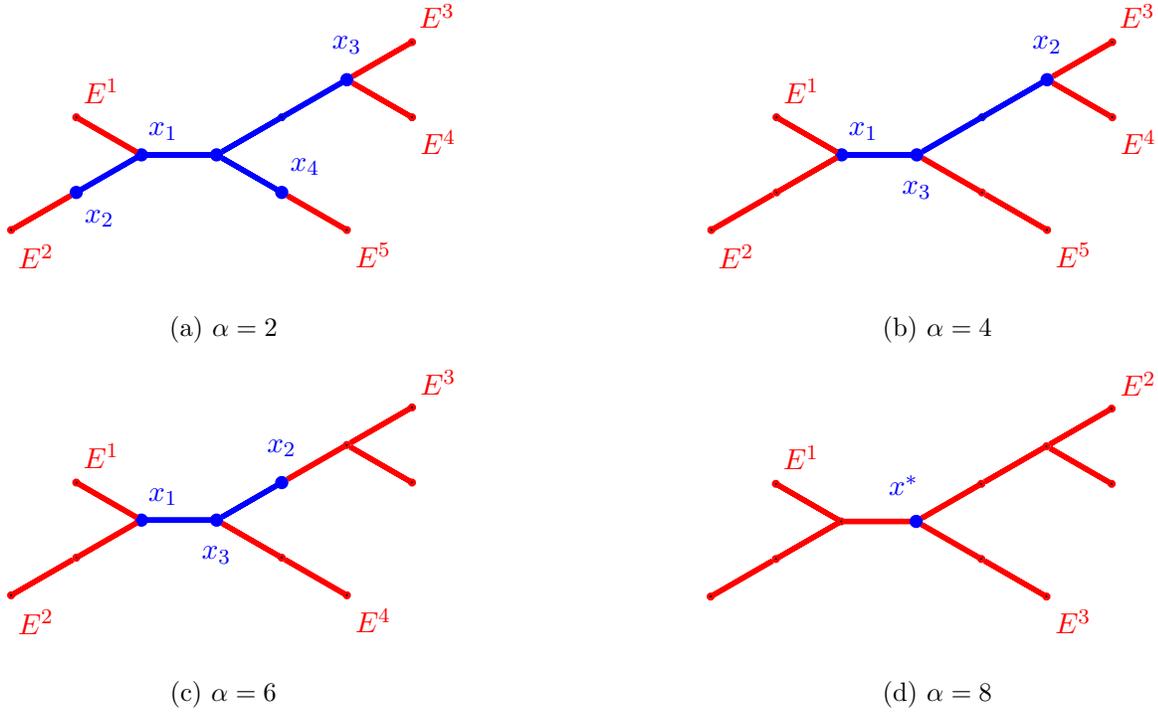
\begin{figure}[!ht]
    \centering
    \begin{subfigure}[b]{0.35\textwidth}
         \centering
        \begin{tikzpicture}[line width=1pt]
    \tikzstyle{every node}=[draw,circle,fill=black,minimum size=1pt,
                            inner sep=0pt]
    \draw (0,0) [color=blue] node (a)[label=60:${\ x_1\ }$, color=blue, minimum size=3pt] {}
        -- ++(0:1cm) [color= blue, line width=2pt] node (0) [label=-90:${}$, color=blue, minimum size=3pt] {}
        -- (a);
    \draw (0)
       -- ++(30:1cm) [color= blue, line width=2pt] node (12) {}
       -- ++(30:1cm) node (1) [label=90:${\ x_3\ }$, color=blue, minimum size=3pt]{}
       -- (0);
    \draw (0)
       -- ++(-30:1cm) [color= blue, line width=2pt] node (2) [label=60:${\ x_4\ }$, color=blue, minimum size=3pt]{}
       -- (0);
    \draw (2)
       -- ++(-30:1cm) [color= red, line width=2pt] node (22) [label=-30:${\ E^5\ }$]{}
       -- (2);
    \draw (1)
       -- ++(30:1cm) [color= red, line width=2pt] node (3) [label=60:${\ E^3\ }$]{}
       -- (1);
    \draw (1)
       -- ++(-30:1cm) [color= red, line width=2pt] node (4) [label=-30:${\ E^4\ }$]{}
       -- (1);
    \draw (a)
     -- ++(150:1cm)[color= red, line width=2pt] node (5) [label=60:${\ E^1\ }$]{}
     -- (a);
    \draw (a)
     -- ++(-150:1cm)[color= blue, line width=2pt] node (6)[label=-60:${\ x_2\ }$, color=blue, minimum size=3pt] {}
     -- (a);
    \draw(6)
     -- ++(-150:1cm)[color= red, line width=2pt] node (62) [label=-60:${\ E^2\ }$]{}
     -- (6);
\end{tikzpicture}
         \caption{$\alpha=2$}
         \label{fig:1}
     \end{subfigure}
    \hfill
    \begin{subfigure}[b]{0.5\textwidth}
         \centering
    \begin{tikzpicture}[line width=1pt]
    \tikzstyle{every node}=[draw,circle,fill=black,minimum size=1pt,
                            inner sep=0pt]
    \draw (0,0) [color=blue] node (a)[label=60:${\ x_1\ }$, color=blue, minimum size=3pt] {}
        -- ++(0:1cm) [color= blue, line width=2pt] node (0) [label=-90:${\ x_3\ }$, color=blue, minimum size=3pt] {}
        -- (a);
    \draw (0)
       -- ++(30:1cm) [color= blue, line width=2pt] node (12) {}
       -- ++(30:1cm) node (1) [label=90:${\ x_2\ }$, color=blue, minimum size=3pt]{}
       -- (0);
    \draw (0)
       -- ++(-30:1cm) [color= red, line width=2pt] node (2) {}
       -- (0);
    \draw (2)
       -- ++(-30:1cm) [color= red, line width=2pt] node (22) [label=-30:${\ E^5\ }$]{}
       -- (2);
    \draw (1)
       -- ++(30:1cm) [color= red, line width=2pt] node (3) [label=60:${\ E^3\ }$]{}
       -- (1);
    \draw (1)
       -- ++(-30:1cm) [color= red, line width=2pt] node (4) [label=-30:${\ E^4\ }$]{}
       -- (1);
    \draw (a)
     -- ++(150:1cm)[color= red, line width=2pt] node (5) [label=60:${\ E^1\ }$]{}
     -- (a);
    \draw (a)
     -- ++(-150:1cm) [color= red, line width=2pt] node (6) {}
     -- (a);
    \draw(6)
     -- ++(-150:1cm)[color= red, line width=2pt] node (62) [label=-60:${\ E^2\ }$]{}
     -- (6);
\end{tikzpicture}
         \caption{$\alpha=4$}
         \label{fig:2}
     \end{subfigure}
     
    \begin{subfigure}[b]{0.35\textwidth}
         \centering
        \begin{tikzpicture}[line width=1pt]
    \tikzstyle{every node}=[draw,circle,fill=black,minimum size=1pt,
                            inner sep=0pt]
    \draw (0,0) [color= blue] node (a)[label=60:${\ x_1\ }$, color=blue, minimum size=3pt] {}
        -- ++(0:1cm) [color= blue, line width=2pt] node (0) [label=-90:${\ x_3\ }$, color=blue, minimum size=3pt] {}
        -- (a);
    \draw (0)
       -- ++(30:1cm) [color= blue, line width=2pt]node (1) [label=90:${\ x_2\ }$, color=blue, minimum size=3pt]{}
       -- (0);
    \draw (1)
       -- ++(30:1cm) [color= red, line width=2pt] node (12) {}
       -- (1);
    \draw (0)
       -- ++(-30:1cm) [color= red, line width=2pt] node (2) {}
       -- (0);
    \draw (2)
       -- ++(-30:1cm) [color= red, line width=2pt] node (22) [label=-30:${\ E^4\ }$]{}
       -- (2);
    \draw (12)
       -- ++(30:1cm) [color= red, line width=2pt] node (3) [label=30:${\ E^3\ }$]{}
       -- (12);
    \draw (12)
       -- ++(-30:1cm) [color= red, line width=2pt] node (4){}
       -- (12);
    \draw (a)
     -- ++(150:1cm)[color= red, line width=2pt] node (5) [label=60:${\ E^1\ }$]{}
     -- (a);
    \draw (a)
     -- ++(-150:1cm) [color= red, line width=2pt] node (6) {}
     -- (a);
    \draw(6)
     -- ++(-150:1cm)[color= red, line width=2pt] node (62) [label=-60:${\ E^2\ }$]{}
     -- (6);
\end{tikzpicture}
         \caption{$\alpha=6$}
         \label{fig:3}
     \end{subfigure}
    \hfill
    \begin{subfigure}[b]{0.5\textwidth}
         \centering
    \begin{tikzpicture}[line width=1pt]
    \tikzstyle{every node}=[draw,circle,fill=black,minimum size=1pt,
                            inner sep=0pt]
    \draw (0,0) [color= blue] node (a) [label=10:${\ \ \ \ \ \ x^*\ }$]{}
        -- ++(0:1cm) [color= red, line width=2pt] node (0) [color=blue, minimum size=3pt] {}
        -- (a);
    \draw (0)
       -- ++(30:1cm) [color= red, line width=2pt] node (1) {}
       -- (0);
    \draw (1)
       -- ++(30:1cm) [color= red, line width=2pt] node (12) {}
       -- (1);
    \draw (0)
       -- ++(-30:1cm) [color= red, line width=2pt] node (2) {}
       -- (0);
    \draw (2)
       -- ++(-30:1cm) [color= red, line width=2pt] node (22) [label=-30:${\ E^3\ }$]{}
       -- (2);
    \draw (12)
       -- ++(30:1cm) [color= red, line width=2pt] node (3) [label=30:${\ E^2\ }$]{}
       -- (12);
    \draw (12)
       -- ++(-30:1cm) [color= red, line width=2pt] node (4){}
       -- (12);
    \draw (a)
     -- ++(150:1cm)[color= red, line width=2pt] node (5) [label=60:${\ E^1\ }$]{}
     -- (a);
    \draw (a)
     -- ++(-150:1cm) [color= red, line width=2pt] node (6) {}
     -- (a);
    \draw(6)
     -- ++(-150:1cm)[color= red, line width=2pt] node (62) {}
     -- (6);
\end{tikzpicture}
         \caption{$\alpha=8$}
         \label{fig:4}
     \end{subfigure}
    \caption{The components of $\overline{E}$ are shown in red and the core $E^0$ is shown in blue for $\alpha= 2, 4, 6, 8$. The local roots $x^*, x_i$ ($i \geq 1$) are labeled as blue points. }
    \label{fig1}
\end{figure}

We make a number of observations about $\overline{E}$, which we state without proof.

\begin{proposition} \label{prop:alpha} Let $Q$ be a tree. Then

(i)  $\overline{E}(Q,\alpha_1) \subseteq \overline{E}(Q, \alpha_2)$ for any $\alpha_1 \leq \alpha_2$;

(ii) there exists an unique $\alpha^*$ such that $\overline{E}(Q, \alpha^*)=Q$ and $\overline{E}(Q,\alpha)\neq Q$ for any $\alpha<\alpha^*$;

(iii) if $\overline{E}(Q,\alpha) \neq Q$, then the boundary of each maximal connected component $X$ of $\overline{E}$ is a single point $x$, which we call the {\bf local root} of $X$. When $x$ is removed, the remaining disjoint components of $X$ are subtrees of measure at most $\alpha/2$. We will also refer to $x$ as the local root of these subtrees.
\end{proposition}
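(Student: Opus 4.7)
My plan is to study the function $\phi(x) := 2\min_{i=1,2}\lambda(Q_i(x))$ defined for regular $x\in Q$, so that $E(Q,\alpha) = \{x \text{ regular} : \phi(x) < \alpha\}$. Parameterising each arc by arc length shows that $\lambda(Q_1(x))$ and $\lambda(Q_2(x))$ are affine in the parameter with slopes $+1$ and $-1$, so $\phi$ is continuous and piecewise linear with slopes $\pm 1$ on every arc, and in particular never constant on a nondegenerate subinterval. Part (i) then follows immediately from the inclusion $\{\phi<\alpha_1\}\subseteq\{\phi<\alpha_2\}$ whenever $\alpha_1\le\alpha_2$, since closure preserves inclusion. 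For part (ii), I would take $\alpha^*:=\sup_{x \text{ regular}}\phi(x)$, which is finite and at most $\mu$. For $\alpha>\alpha^*$ every regular point lies in $E$, so $\overline{E}=Q$; for $\alpha<\alpha^*$ some regular $x_0$ has $\phi(x_0)>\alpha$, and by continuity this strict inequality holds on an open neighbourhood of $x_0$, forcing $\overline{E}\ne Q$. At $\alpha=\alpha^*$, the set of regular points excluded from $E$ consists of the finitely many points where $\phi$ attains its maximum; the $\pm 1$ slope structure makes each such point a one-sided limit of points where $\phi<\alpha^*$, and a similar argument handles nodes, so $\overline{E}(Q,\alpha^*)=Q$. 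Uniqueness of $\alpha^*$ is automatic from its characterisation as $\inf\{\alpha:\overline{E}(Q,\alpha)=Q\}$.

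For part (iii), assume $\overline{E}\ne Q$ and let $X$ be a maximal connected component of $\overline{E}$, which is closed. Because $Q$ is a tree and $X$ is a connected subtree, each connected component $C$ of $Q\setminus X$ attaches to $X$ at a single point (the unique point of $X\cap\overline{C}$, by the standard path-uniqueness argument in trees). The key preliminary fact is that $\lambda(C)>\alpha/2$ for every such $C$. To see this, observe that arbitrarily close to the attachment point of $C$ there must exist a regular $z\in C$ with $\phi(z)>\alpha$: otherwise every point of $Q\setminus X$ near this attachment point would lie in some other component of $\overline{E}$, which would then contain the attachment point in its closure, violating the pairwise disjointness of distinct maximal components of the closed set $\overline{E}$. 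For any such $z$, the component of $Q\setminus\{z\}$ not containing the attachment point is entirely contained in $C$ and has measure greater than $\alpha/2$, giving the claim. Now suppose $X$ had two distinct attachment points $y_1\ne y_2$ with corresponding components $C_1,C_2$. The unique $y_1$-to-$y_2$ path in $Q$ lies in $X$, and at any regular interior point $m$ of this path we have $\lambda(Q_i(m))\ge\lambda(C_i)>\alpha/2$ for $i=1,2$, so $\phi(m)>\alpha$, contradicting $m\in\overline{E}$. Hence $X$ has a unique attachment point $x$, its local root.

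Finally, for each component $Y$ of $X\setminus\{x\}$ and each regular $y\in Y\cap E$, the side of $y$ containing $x$ also contains $C$ and thus has measure $>\alpha/2$, so the smaller side of $y$ must lie in the direction away from $x$, is contained in $Y$, and has measure $<\alpha/2$. Taking $y$ on the arc of $Y$ incident to $x$ and letting $y\to x$ through regular points of $E$ (which are dense in $Y$ by the non-constancy of $\phi$), this away-side measure approaches $\lambda(Y)$, yielding $\lambda(Y)\le\alpha/2$. The main obstacle I anticipate is pinning down the strict inequality $\lambda(C)>\alpha/2$, which rests on combining the non-constancy of $\phi$ on subintervals with the disjointness and closedness of distinct maximal components of $\overline{E}$; once this strict bound is secured, the geometric picture of $X$ as a bouquet of small subtrees hanging off a single local root follows directly from tree connectivity.
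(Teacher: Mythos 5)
The paper states Proposition~\ref{prop:alpha} explicitly \emph{without proof} (``We make a number of observations about $\overline{E}$, which we state without proof''), so there is no authorial argument to compare yours against; what you have written is a correct, self-contained proof that supplies the missing details. Your central device --- the function $\phi(x)=2\min_{i}\lambda(Q_i(x))$, which is continuous and piecewise linear with slopes $\pm 1$ on each arc and never constant on a subinterval --- does all the work: it makes (i) a one-line monotonicity statement, identifies $\alpha^*$ in (ii) as $\sup\phi$ (with the slope structure guaranteeing that the finitely many regular points where $\phi=\alpha^*$ are still limits of $E$, so $\overline{E}(Q,\alpha^*)=Q$), and in (iii) yields the key strict bound $\lambda(C)>\alpha/2$ for each component $C$ of $Q\setminus X$, from which uniqueness of the attachment point and the bound $\lambda(Y)\le\alpha/2$ on the components of $X\setminus\{x\}$ both follow. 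Three steps are compressed but all recoverable: (a) in the two-attachment-point contradiction, $\phi(m)>\alpha$ gives $m\notin E$, and you need one more sentence (continuity of $\phi$ plus strictness) to conclude $m\notin\overline{E}$; (b) the claim that points of $C$ near its attachment point form a \emph{connected} set lying in a single other component of $\overline{E}$ uses that $C$ approaches its attachment point along a single arc-germ (otherwise the unique path between two germs would force the attachment point into $C$); and (c) the final limit $\lambda(D_y)\to\lambda(Y)$ deserves the observation that for $y$ below the first node of the germ, $Y$ minus the away-side of $y$ is exactly the segment from $x$ to $y$, of measure $d(x,y)\to 0$, and that regular points of $E$ on that initial segment lie in $Y$ because on the interior of an arc $\overline{E}$ coincides with the closure of $\{\phi<\alpha\}$. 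None of these is a gap; they are routine consequences of the structure you set up.
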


We have labeled the local roots $x_1,x_2,\ldots$ in Figure~\ref{fig1}. Both the location and number of local roots may change as $\alpha$ changes. In the case $\alpha=2$, the set $\overline{E}$ has four maximal connected components, and four corresponding local roots, $x_1, x_2, x_3$ and $x_4$. When $\alpha=4$ or $6$, the set $\overline{E}$ has only three maximal connected components with local roots $x_1$, $x_2$ and $x_3$. When $\alpha=8$, the set $\overline{E}$ has only one maximal connected component. In this case, we have labeled the local root $x^*$, to be defined later in Subsection~\ref{sec:subtree}.

\cite{alpern2022continuous} showed that the $E$-patrolling strategy guarantees that the value of the continuous patrolling game on trees is at most $\alpha/(\mu+\lambda(E))$. Roughly speaking, the $E$-patrolling strategy repeatedly performs a tour of the tree, adding extra tours of each of the components of $\overline{E}$. Conjecture 1 of \cite{alpern2022continuous} was as follows.

\begin{conjecture}[Tree patrolling conjecture] \label{conj}
If $Q$ is a tree network, then for any~$\alpha$ the $E$-patrolling strategy is optimal and the value of the game is $v^* \equiv \alpha/(\mu+ \lambda(E))$.
\end{conjecture}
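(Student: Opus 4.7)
The upper bound $v \le \alpha/(\mu+\lambda(E))$ has already been established by \cite{alpern2022continuous} via the $E$-patrolling strategy, so the task is to prove the matching lower bound. Following the blueprint in the introduction, the plan is to construct, for each $\varepsilon>0$, an Attacker mixed strategy $\sigma_\varepsilon$ of product form: fix a large horizon $T$, draw the attack start time $t$ uniformly on $[0,T-\alpha]$, and independently draw the attack location $x$ from a fixed probability distribution $F$ on $Q$. The distribution $F$ will be defined through a piecewise-constant density (with respect to arclength) dictated by the subtree decomposition of Section~\ref{sec:roots}: density $2/(\mu+\lambda(E))$ on $\overline{E}$ and density $1/(\mu+\lambda(E))$ on the core $Q\setminus\overline{E}$. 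Using $\lambda(\overline{E})=\lambda(E)$ one checks that $F$ has total mass one, and the doubled density on $\overline{E}$ is the Attacker-side mirror of the extra tours performed on extremity components by the $E$-patrolling strategy.

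Against any Patroller pure strategy $S$, the expected payoff of $\sigma_\varepsilon$ equals
\[
\frac{1}{T-\alpha}\int_0^{T-\alpha} F\bigl(S([t,t+\alpha])\bigr)\,dt,
\]
so everything reduces to a deterministic coverage inequality of the form
\[
\int_0^{T-\alpha} F\bigl(S([t,t+\alpha])\bigr)\,dt \;\le\; \frac{\alpha(T-\alpha)}{\mu+\lambda(E)} + C(Q,\alpha),
\]
with $C(Q,\alpha)$ independent of $T$; dividing by $T-\alpha$ and sending $T\to\infty$ then delivers the required $\varepsilon$-bound. To prove this inequality I would switch the order of integration and write the left side as $\int_Q f(x)\,m_S(x)\,d\lambda(x)$, where $m_S(x)=\mu\{t\in[0,T-\alpha]:x\in S([t,t+\alpha])\}$, and bound $m_S$ separately on the core and on each extremity subtree $X_i$. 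On the core, the estimate $m_S(x)\le\alpha n(x)$ with $n(x)$ the number of visits, combined with the patrol-length identity $\int_C n(x)\,d\lambda(x)\le T_C$, contributes at most $\alpha T_C/(\mu+\lambda(E))$. On each $X_i$ the crucial input is the defining property $\lambda(X_i)\le\alpha/2$ of the subtree decomposition: the two successive visits to any point $x$ within a single excursion through the local root $x_i$ are separated by at most $2\lambda(X_i)\le\alpha$, so the intervals $[\sigma_j-\alpha,\sigma_j]$ contributing to $m_S(x)$ overlap pairwise. This enforced overlap is the mechanism that compensates for the doubled Attacker density on $\overline{E}$.

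The principal obstacle is executing this accounting uniformly for all Patroller behaviour, not just clean in-and-out excursions. The patrol can revisit the same subtree many times, dwell on a local root (which lies in $\overline{E}$), and intersperse its excursions with long core traversals in arbitrary ways, causing the intervals $\{[\sigma_j-\alpha,\sigma_j]\}$ associated with successive visits to a single point to overlap in intricate patterns; the naive pointwise bound $F(S([t,t+\alpha]))\le 2\alpha/(\mu+\lambda(E))$ is loose by a factor of two and does not by itself close the gap. I would address this with a charging argument that allocates to each infinitesimal unit of patrol time a share of the coverage budget of size $\alpha/(\mu+\lambda(E))$, and then prove, via a combinatorial accounting over the pieces of the subtree decomposition, that the true coverage never exceeds the total charge except for horizon-independent boundary effects absorbed into $C(Q,\alpha)$. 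Showing that the overlap of visit windows inside extremity subtrees of length at most $\alpha/2$ \emph{exactly} closes the factor-of-two gap in the time-average, uniformly over all pure Patroller strategies, is the heart of the argument and is where the density concept and the constraint $\lambda(X_i)\le\alpha/2$ built into the subtree decomposition play their decisive roles.
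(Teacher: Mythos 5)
Your overall architecture (upper bound from the $E$-patrolling strategy, plus an Attacker strategy with a uniformly random start time over a long horizon and a fixed spatial distribution tied to the subtree decomposition) matches the paper's. But the spatial distribution you choose is the wrong one, and the issue you flag at the end --- whether the overlap of visit windows inside the extremity subtrees ``exactly closes the factor-of-two gap'' --- is not merely an unexecuted technicality: for your distribution it is false. Take $Q$ to be a single arc of length $\mu$ with $\alpha<\mu$, so that $\lambda(E)=\alpha$, the extremity subtrees are the two end segments of length $\alpha/2$, and $v^*=\alpha/(\mu+\alpha)$. Against your density ($2/(\mu+\alpha)$ on the ends, $1/(\mu+\alpha)$ in the middle), the plain oscillating patrol of period $2\mu$ visits a point at distance $s\le\alpha/2$ from an endpoint twice per period with the two visits only $2s\le\alpha$ apart, so its covered time per period is $\alpha+2s$ rather than $\alpha$; integrating $\frac{2}{\mu+\alpha}\cdot\frac{\alpha+2s}{2\mu}$ over the two ends and $\frac{1}{\mu+\alpha}\cdot\frac{\alpha}{\mu}$ over the middle gives an interception probability of $v^*\bigl(1+\frac{\alpha}{2\mu}\bigr)$, bounded away from $v^*$ independently of $T$. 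The surplus is exactly the $2s$ term that your ``enforced overlap'' produces: the overlap caps $m_S(x)$ at $\alpha+2d$ rather than $2\alpha$, but integrated against a uniform doubled density this still exceeds the $\alpha$-per-unit-time budget by order $\lambda(E^j)^2$ per excursion.

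The paper's fix is to distribute the attack mass on each extremity subtree $E^j$ not uniformly but according to the Equal Branch Density (EBD) distribution $h^j$, which is concentrated on the \emph{leaf nodes} of $E^j$ (on the line segment this is an atom of mass $\alpha/(\mu+\alpha)$ at each endpoint). This changes the accounting entirely: a patrol confined to $E^j$ during an interval of length $\delta_i$, covering a subtree $Z_i$ rooted at the local root, must spend at least $2\lambda(Z_i)$ in transit and hence at most $\delta_i-2\lambda(Z_i)$ dwelling at leaf atoms, giving $P_i\le \frac{1}{T}e(Z_i)(\alpha+\delta_i-2\lambda(Z_i))$; combined with the EBD density lemma $e(Z_i)\le 2\lambda(Z_i)/(\mu+\lambda(E))$ and the bound $\lambda(Z_i)\le\alpha/2$ from the decomposition, an elementary rearrangement yields $P_i\le \delta_i v^*/T$ with no surplus, and the first and last intervals contribute only an $O(\alpha/T)$ error. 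Your proposal already contains the correct density bound for whole subtrees, but without the leaf concentration the within-subtree allocation of mass is exploitable, so no charging argument can recover the theorem from your $F$; you would need to replace the uniform doubled density on $\overline{E}$ by the EBD measures on the $E^j$ and then the excursion accounting goes through as above.
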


We will settle the tree patrolling conjecture in Section~\ref{sec:main}.

\section{Subtree Decomposition and Density} \label{sec:roots}

In this section we introduce the notion of the {\em local root of $Q$} and the {\em subtree decomposition} of a tree network in Subsection~\ref{sec:subtree} and the idea of {\em density} in Subsection~\ref{sec:density}.

\subsection{Subtree Decomposition} \label{sec:subtree}

In order to define the subtree decomposition of a tree network, we first introduce a new subset of $Q$ here called the {\em core} of $Q$, defined as the closure of the complement of $\overline{E}$ and denoted $E^0=E^0(Q,\alpha)$. The core is connected and closed. The reason for this rather awkward definition is that $E$ is only defined on regular points, but informally we can think of the core as the complement of the extremity set. The core is depicted in blue in Figure~\ref{fig1} for each value of $\alpha$. As $\alpha$ increases, the extremity set grows while the core shrinks. Notice that when $\alpha \geq 8$, the set $\overline{E}$ is equal to $Q$ and $E^0=\emptyset$.

Thus, for $\alpha < \alpha^*$, any tree network $Q$ can be expressed as the disjoint union of the core and a set of subtrees each of length at most $\alpha/2$ (see Proposition~\ref{prop:alpha}, part (iii)). This is the subtree decomposition of $Q$. It is easy to see that the core cannot contain any leaf nodes of $Q$.   In the remainder of this subsection we will show that for $\alpha \ge \alpha^*$, we can form a decomposition of $Q$ with similar properties.

If $\alpha \ge \alpha^*$, the set $\overline{E}$ has only one connected component, which is equal to $Q$. In this case, we define the {\em local root of $Q$}.
\begin{definition}
Let $Q$ be a tree and let $\alpha_1,\alpha_2, \ldots$ be a sequence of increasing positive numbers converging to $\alpha^*$. The local root of $Q$ is the set $ \cap_{n=1}^\infty E^0(Q,\alpha_n)$.
\end{definition}

It is easy to show that the local root of $Q$ is specified independently of the choice of sequence $(\alpha_n)_{n=1}^\infty$, and is in fact equal to $\cap_{0<\alpha < \alpha^*} E^0(Q,\alpha)$. The fact that the local root is non-empty follows from Cantor's intersection theorem, since it is the intersection of a sequence of non-empty, non-increasing, closed sets, by Proposition~\ref{prop:alpha}, part (i). In fact, we will show in Proposition~\ref{pro1} that the local root of $Q$ is a singleton, and without ambiguity, we will call its unique member the local root of $Q$ and denote it by $x^*$. The local root of the tree $Q$ is labeled in Figure~\ref{fig1}.

\begin{proposition}\label{pro1}
Let $Q$ be a tree. Then, 

(i) The local root of $Q$ is a singleton, $x^*$.

(ii) Each of the maximal connected components of $Q-\{x^*\}$ has measure at most $\alpha^*/2$.
\end{proposition}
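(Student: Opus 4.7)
My plan is to characterize the local root through a ``centrality'' function $g: Q \to [0, \mu/2]$ defined by $g(z) = \min(\lambda(Q_1(z)), \lambda(Q_2(z)))$ at a regular $z$, and extended to a node $v$ with adjacent subtree measures $\mu_1,\ldots,\mu_n$ by $g(v) = \max_i \min(\mu_i, \mu - \mu_i)$ (the value obtained by approaching $v$ along its ``best'' incident arc). A careful check of the definitions yields $z \in E^0(Q,\alpha) \Leftrightarrow g(z) \ge \alpha/2$ up to boundary cases, so $\alpha^*/2 = \sup_{z \in Q} g(z)$ and the local root $R = \cap_{\alpha < \alpha^*} E^0(Q, \alpha)$ equals the super-level set $\{z \in Q : g(z) \ge \alpha^*/2\}$.

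For part (i), $R$ is closed, connected (as a nested intersection of the connected cores $E^0(Q,\alpha)$), and nonempty by Cantor, hence a closed subtree of $Q$. I would then show that $R$ contains no arc segment of positive length. On any arc, $\lambda(Q_1(\cdot))$ is linear in position with slope $\pm 1$, so $g(\cdot) = \min(\lambda, \mu - \lambda)$ is piecewise linear and strictly monotone away from its unique peak at $\lambda(Q_1) = \mu/2$ (where $g = \mu/2$). A positive-length arc segment in $R$ would force $g \equiv \sup g = \alpha^*/2$ on it, contradicting strict monotonicity (moving in the increasing direction produces $g > \sup g$) or the uniqueness of the peak. A connected subtree of $Q$ that contains no arc segment of positive length is a single point, so $R = \{x^*\}$.

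For part (ii), suppose for contradiction some component $C_{j^*}$ of $Q-\{x^*\}$ has $\mu_{j^*} > \alpha^*/2$. If $\mu_{j^*} \le \mu/2$, a regular $w \in C_{j^*}$ at distance $\delta \in (0, \mu_{j^*} - \alpha^*/2)$ from $x^*$ satisfies $g(w) = \mu_{j^*} - \delta > \alpha^*/2$, contradicting $\sup g = \alpha^*/2$. If instead $\mu_{j^*} > \mu/2$, I would walk from $x^*$ along a unit-speed path in $C_{j^*}$ toward a leaf; the measure on the ``far'' side of the walker decreases continuously from $\mu_{j^*}$ down to $0$, so by the intermediate value theorem there is a regular $z$ with both components of $Q-\{z\}$ of measure $\mu/2$, giving $g(z) = \mu/2 \ge \alpha^*/2$. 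If $\alpha^* < \mu$ this is a strict contradiction with $\sup g = \alpha^*/2$; if $\alpha^* = \mu$, then $z \in R = \{x^*\}$ by part (i), but $z \in C_{j^*}$ forces $z \ne x^*$, a contradiction. Hence every component of $Q-\{x^*\}$ has measure at most $\alpha^*/2$.

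The principal obstacle is setting up the equivalence between the topological definition of $E^0(Q,\alpha)$ and the pointwise condition $g(z) \ge \alpha/2$ at nodes, since $E$ is defined only on regular points while $\overline{E}$ and $E^0$ involve closures; this requires careful tracking of strict versus non-strict inequalities as the threshold sweeps toward $\alpha^*$. Once this bookkeeping is pinned down, both parts reduce to strict monotonicity of the arc-length functions along arcs together with a single IVT argument.
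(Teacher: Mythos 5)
Your overall route---characterizing the core via the centrality function $g$ and reducing both parts to the strict monotonicity of $g$ along arcs---is genuinely different from the paper's (which proves (i) by showing $\lambda(E^0(Q,\alpha_n))\to 0$ using continuity of $\alpha\mapsto\lambda(E^0(Q,\alpha))$, and proves (ii) by producing an $\alpha'<\alpha^*$ with $x^*\in E(Q,\alpha')\cap E^0(Q,\alpha')=\emptyset$). Your part (i), the characterization $R=\{z: g(z)\ge\alpha^*/2\}$ with $\alpha^*=2\sup g$ (the deferred boundary bookkeeping does wash out in the nested intersection, since membership in $R$ only ever requires the strict inequalities $g(z)>\alpha/2$ for $\alpha<\alpha^*$), and the case $\mu_{j^*}\le\mu/2$ of part (ii) are all sound.

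The genuine gap is the intermediate value step in the case $\mu_{j^*}>\mu/2$. The ``far-side'' measure is not continuous along a unit-speed walk from $x^*$ to a leaf: at each branch node of $C_{j^*}$ it drops discontinuously from the total measure of all branches above that node to the measure of the single branch you continue into, so it can skip over $\mu/2$. Indeed, a regular point splitting $Q$ into two components of measure $\mu/2$ each need not exist at all --- in a star with three unit arms, every regular point splits the tree into pieces of measure $s$ and $3-s$ with $s\le 1$, never $3/2$ and $3/2$. So the point $z$ your argument invokes may not exist. The case can be rescued without any walk: since $x^*\in R$ you have $g(x^*)=\alpha^*/2$, and when $\mu_{j^*}>\mu/2$ every term $\min(\mu_i,\mu-\mu_i)$ in the definition of $g(x^*)$ is at most $\mu-\mu_{j^*}$ (the $j^*$ term equals $\mu-\mu_{j^*}$ because $\mu_{j^*}>\mu/2$, and for $i\ne j^*$ one has $\mu_i\le\mu-\mu_{j^*}$), whence $\mu-\mu_{j^*}\ge g(x^*)=\alpha^*/2$. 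A regular point $w$ at small distance $\delta<\mu_{j^*}-\mu/2$ from $x^*$ on the arc entering $C_{j^*}$ then satisfies $g(w)=\min(\mu_{j^*}-\delta,\;\mu-\mu_{j^*}+\delta)=\mu-\mu_{j^*}+\delta>\alpha^*/2$, contradicting $\sup g=\alpha^*/2$. With that replacement your proof goes through.
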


\begin{proof} For (i), let $(\alpha_n)_{n=1}^\infty$ be an increasing sequence converging to $\alpha^*$ and let $f$ be the real function defined by $f(\alpha)=\lambda(E^0(Q,\alpha))$. Then $f$ is a continuous, and it follows that 
\[
\lambda(E^0(Q,\alpha_n))=f(\alpha_n) \rightarrow f(\alpha^*) = \lambda(E^0(Q,\alpha^*))=0.
\]
Now suppose the local root of $Q$ contains two points $x$ and $y$ with $x \neq y$, and let $\varepsilon=d(x,y)$. Let $N$ be such that $f(\alpha_N) < \varepsilon$. Since $E^0(Q,\alpha_N)$ is connected and contains both $x$ and $y$, it must contain the path from $x$ to $y$. Therefore, its measure must be at least $\varepsilon$, contradicting $f(\alpha_N) < \varepsilon$. So the local root of $Q$ is a singleton, $x^*$.

To prove (ii), assume for a contradiction that $Q-\{x^*\}$ has a component $Q_1$ with  $\lambda(Q_1) > \alpha^*/2$. First suppose that $x^*$ is a regular point. In this case, $Q-\{x^*\}$ only has two components, and by definition of $\alpha^*$ (Proposition~\ref{prop:alpha}, part (ii)), the other component $Q_2$ must satisfy $\lambda(Q_2) < \alpha^*/2$. Let $\alpha' = \lambda(Q_2)+ \alpha^*/2 < \alpha^*$. Since $\lambda(Q_2)<\alpha'/2$,  we must have $x^* \in E(Q,\alpha')$ by definition of the extremity set. But by definition of $x^*$ and because $\alpha'<\alpha^*$, we must have $x^* \in E^0(Q,\alpha')$, which is a contradiction, since $E(Q,\alpha') \cap E^0(Q,\alpha')=\emptyset$

Now suppose $x^*$ is a node and let $Q'$ be the subtree  $Q_1 \cup \{x^*\}$. Then,  $\lambda(Q') = \lambda(Q_1) > \alpha^*/2$. Let $z$ be a regular point on the arc incident to $x^*$ in $Q'$ such that $d(x^*, z) < \lambda(Q_1) -\alpha^*/2$.  It is easy to see that one component $Q_1(z)$ of $Q - \{z\}$ is a subset of $Q_1$ and the other component $Q_2(z)$ contains $x^*$.  We have $\lambda(Q_1(z)) = \lambda(Q_1) - d(x^*,z) > \alpha^*/2$. So, $\lambda(Q_2(z)) < \alpha^*/2$, by definition of $\alpha^*$. Let $\alpha'' = \lambda(Q_2(z)) + \alpha^*/2 < \alpha^*$. Since $\lambda(Q_2(z))<\alpha''/2$, we must have $x^* \in Q_2(z) \subset \overline{E}(Q,\alpha'')$ by definition of the extremity set. Because $x^*$ is not in the boundary of $Q_2(z)$, it is obvious that $x^* \not \in E^0(Q,\alpha'')$. But by definition of $x^*$ and because $\alpha''<\alpha^*$, we must have $x^* \in E^0(Q,\alpha'')$, which is again a contradiction.
\end{proof}

The local root $x^*$ of $Q$ is labeled on part (d) of  Figure~\ref{fig1}. Note that it does not depend on $\alpha$.

Proposition~\ref{pro1} implies that the tree $Q$ can be expressed as a disjoint union of its local root and a set of subtrees of length at most $\alpha/2$.  Combining this with the decomposition described earlier in this subsection for $\alpha < \alpha^*$, we have shown the following.

\begin{proposition}[Subtree decomposition of $Q$]
For any tree $Q$ and any attack time $\alpha$, we can express $Q$ as a union of its core $E^0$ and a set of closed subtrees $E^1,\ldots,E^k$ whose union is $\overline{E}$ such that $\lambda(E^i) \le \alpha/2$ for each $i=1,\ldots,k$ and $\sum_{i=1}^k \lambda(E^i) = \lambda(E)$.
\end{proposition}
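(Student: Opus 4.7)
The plan is to unify the two cases already discussed in the subsection, depending on whether $\alpha<\alpha^*$ or $\alpha\geq\alpha^*$. In the former case, $E^0$ is nonempty and $\overline{E}\neq Q$; in the latter, $E^0=\emptyset$ and $\overline{E}=Q$. In either situation, it suffices to exhibit finitely many closed subtrees $E^1,\dots,E^k\subseteq\overline{E}$, each of length at most $\alpha/2$, whose union is $\overline{E}$ and whose pairwise intersections have measure zero; combined with the trivial identity $Q=E^0\cup\overline{E}$, this will give the proposition.

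For $\alpha<\alpha^*$, I would appeal to Proposition~\ref{prop:alpha}(iii) directly. The set $\overline{E}$ has finitely many maximal connected components $X_1,\dots,X_m$ (finite in number because $Q$ itself is a finite tree), and each $X_j$ has a single boundary point, its local root $y_j$, such that $X_j\setminus\{y_j\}$ splits into disjoint subtrees of measure at most $\alpha/2$. I would take the $E^i$ to be the closures of all of these subtrees as $j$ ranges over the components; passing to the closure only adjoins the single point $y_j$, so the length bound $\lambda(E^i)\leq\alpha/2$ is preserved, the $E^i$ collectively cover $\bigcup_j X_j=\overline{E}$, and any two $E^i$ meet only in the finite set of local roots. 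An analogous figure to check intuition is Figure~\ref{fig1}(a), where the component containing $x_3$ splits into the two subtrees $E^3$ and $E^4$.

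For $\alpha\geq\alpha^*$, I would appeal to Proposition~\ref{pro1}(ii). Here $\overline{E}=Q$ and $E^0=\emptyset$, so it suffices to decompose $Q$ itself: the local root $x^*$ of $Q$ has the property that every connected component of $Q-\{x^*\}$ has measure at most $\alpha^*/2\leq\alpha/2$, so I would take the $E^i$ to be the closures of those components. In either case the union of local roots is finite and hence of $\lambda$-measure zero, so $\sum_{i=1}^{k}\lambda(E^i)=\lambda(\overline{E})=\lambda(E)$. Since Propositions~\ref{prop:alpha} and~\ref{pro1} do essentially all the work, I do not anticipate any substantive obstacle; the only delicate point is checking that passing from the open subtrees of measure at most $\alpha/2$ to their closures does not inflate lengths, which holds because only finitely many single points (the local roots) are being added and each of these contributes zero to $\lambda$.
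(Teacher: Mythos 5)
Your proposal is correct and follows essentially the same route as the paper, which likewise obtains the decomposition by combining Proposition~\ref{prop:alpha}(iii) for $\alpha<\alpha^*$ with Proposition~\ref{pro1}(ii) for $\alpha\ge\alpha^*$; the paper simply states this combination without spelling out the closure and measure-zero-overlap details that you make explicit.
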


\subsection{Density}
\label{sec:density}

In this subsection we introduce the concept of {\em density}. 

Suppose a measure $P$ on $Q$ is fixed. For any measurable $A \subseteq Q$, we define the \textit{density} $\rho_P(A)\equiv \rho(A)$ by $\rho(A)=P(A)/\lambda(A)$. 

Suppose $Q$ is a tree with a distinguished point $O$, called its {\em root}. We say a point $y \in Q$ is {\em above} a point (or arc) $x$ if the unique path from $O$ to $y$ contains $x$. We write $Q_x$ for the subtree of $Q$ containing $x$ and all points above $x$. We call a node $x$ a {\em branch node} if it is not a leaf node. For a branch node $x$ of $Q$, we call the {\em branches at $x$} the collection of maximal disjoint components of $Q_x-\{x\}$. 

We state the definition of the \textit{Equal Branch Density} (EBD) distribution, as given in \cite{alpern2013search}, \cite{alpern10} and \cite{alpern14}.
\begin{definition}
For a tree $Q$ with root $O$, the \textit{Equal Branch Density} (EBD) distribution is the unique measure~$h$ on the leaf nodes of $Q$ (not including $O$) such that at every branch node $x$ all the branches at $x$ have the same density~$\rho_h$.
\end{definition}


We state here an important property of the EBD distribution, which is a consequence of Lemma~6 of \cite{alpern2013search}.

\begin{lemma}\label{lem:EBD} The EBD distribution $h$ on a rooted tree $Q$ has the property that for any subtree $Z$ with root $x$ contained in $Q_x$, we have $\rho_h(Z) \leq \rho_h(Q_x)$. 
\end{lemma}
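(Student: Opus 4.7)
The plan is to prove Lemma~\ref{lem:EBD} by structural induction on $Q_x$, using the number of arcs contained in $Q_x$ as the induction parameter. A key preliminary observation is that for any subtree of the form $Y=Q_y$ above some point $y$, the restriction $h|_Y$ is supported precisely on the leaves of $Q$ that lie in $Y$, and after rescaling by $1/h(Y)$ it coincides with the EBD distribution on $Y$ viewed as a rooted tree with root $y$. Because the inequality $\rho_h(Z)\le\rho_h(Q_x)$ is invariant under multiplying the measure by a positive constant, the inductive hypothesis can be applied inside any such subtree without recomputing anything from scratch.

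For the base case, $Q_x$ consists of a single arc above $x$ terminating at a leaf $\ell$ of $Q$. Any subtree $Z\subseteq Q_x$ rooted at $x$ is then a prefix of that arc, so either $Z$ omits $\ell$ and $h(Z)=0$, or $Z=Q_x$; in both cases the inequality holds trivially.

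For the inductive step, I would let $y$ be the nearest point above $x$ (with $y=x$ allowed) that is either a leaf of $Q$ or a node of degree at least three, and let $A$ denote the (possibly degenerate) arc from $x$ to $y$, of length $\ell\ge 0$. The leaf case reduces to the base case, so assume $y$ is a genuine branch point of $Q$; its branches $\overline{B_1},\ldots,\overline{B_n}$ with $n\ge 2$ all share a common density $\rho=\rho_h(Q_y)$ by the defining EBD property. If $Z$ does not reach $y$, then $Z\subseteq A$ is leaf-free so $h(Z)=0$; otherwise $Z=A\cup Z_1\cup\cdots\cup Z_n$, where $Z_i=Z\cap\overline{B_i}$ is a subtree of $\overline{B_i}$ rooted at $y$. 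Each $\overline{B_i}$ has strictly fewer arcs than $Q_x$, so the inductive hypothesis yields $\rho_h(Z_i)\le\rho$ for every $i$, and the density of $Z_0=\bigcup_i Z_i$ is a $\lambda$-weighted average of these, hence also at most $\rho$. A brief algebraic check, using $h(Z)=h(Z_0)\le h(Q_y)$ and the fact that $A$ adjoins length $\ell$ but no mass to both $Z_0$ and $Q_y$, then gives $\rho_h(Z)\le\rho_h(Q_x)$.

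The main obstacle I anticipate is justifying the self-similarity claim: one must verify that the leaves of $\overline{B_i}$ viewed as a tree rooted at $y$ are exactly the leaves of $Q$ sitting in $\overline{B_i}$, and that the branch-density equality at every branch node of $\overline{B_i}$ is inherited from $Q$. Both facts follow because $\overline{B_i}$ contains everything above $y$ in the direction of the $i$th branch, so the branches at any internal node of $\overline{B_i}$ coincide with the branches at that node in $Q$. Once this is in place, the induction and the elementary algebraic fact that $a/b\le A/B$ together with $a\le A$ imply $a/(b+\ell)\le A/(B+\ell)$ for $\ell\ge 0$ finish the argument.
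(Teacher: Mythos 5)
Your proof is correct, and it takes a genuinely different route from the paper: the paper does not prove Lemma~\ref{lem:EBD} at all, but states it as a consequence of Lemma~6 of \cite{alpern2013search}, whereas you give a self-contained and elementary induction. The two ingredients you isolate are exactly the right ones. The self-similarity claim holds for the reason you give: for any $z$ above $y$ the branches at $z$ inside $Q_y$ coincide with the branches at $z$ in $Q$, and the leaves of $Q_y$ other than its root $y$ are precisely the leaves of $Q$ lying in $Q_y$, so by uniqueness the normalized restriction of $h$ is the EBD measure on $Q_y$; the closing algebraic fact ($a/b\le A/B$ and $a\le A$ imply $a/(b+\ell)\le A/(B+\ell)$ for $\ell\ge 0$) is immediate from $aB\le Ab$ and $a\ell\le A\ell$. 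The weighted-average step at the branch point is also sound, since $A,Z_1,\ldots,Z_n$ overlap only in sets of $\lambda$-measure zero. Two small points to tidy in a written version, neither of which affects the substance: first, the correct notion of ``branching point'' for locating $y$ is a point such that $Q_y-\{y\}$ has at least two components (a root of degree two qualifies although its degree is not three); second, in the degenerate cases where $Z$ carries no $h$-mass --- $Z$ stopping short of $y$, or meeting each branch only in $\{y\}$ --- one should conclude $\rho_h(Z)=0$ directly rather than pass through the ratio $h(Z_0)/\lambda(Z_0)$, which may be $0/0$. What your argument buys is independence from the external reference and a proof that could be inserted into the paper verbatim; what the citation buys the authors is brevity.
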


\section{Proof of the Tree Patrolling Conjecture} \label{sec:main}

We begin this section by constructing an Attacker strategy in Subsection~\ref{sec:attack}, which we call the {\em tree attack strategy}. In Subsection~\ref{sec:proof}, we will show that this strategy is $\varepsilon$-optimal.

\subsection{The Tree Attack Strategy} \label{sec:attack}

The tree attack strategy is actually a collection of strategies, and is defined in terms of a parameter $T >0$, which we can think of as the length of some long time interval.

\begin{definition}[tree attack strategy]
Let $Q$ be a tree network, and let $E^0, E^1,\ldots, E^k$ be its subtree decomposition. Let $x_j$ be the local root of $E^j$ for $j=1,..,k$. Let $h^j$ be the EBD measure on $E^j$. For $T>0$, the tree attack strategy (with parameter $T$) begins at a time chosen uniformly at random from the interval $[0,T]$. The location of the attack is given by the measure $e$, defined below.

(i) With probability $e(E^0) \equiv \lambda(E^0)/(\mu+\lambda(E))$, a point of $E^0$ chosen uniformly at random. 

(ii) With probability $e(E^j) \equiv 2\lambda(E^j)/(\mu+\lambda(E))$, a point of $E^j$ chosen according to the EBD distribution~$h^j$, for $j=1,\ldots,k$.
\end{definition}

The tree attack strategy is well defined. Indeed, the total probability $e(Q)$ of attack is given by
$$e(Q) = \sum_{j=0}^k e(E^j) = \frac{\lambda(E^0)}{\mu+\lambda(E)}+\sum_{j=1}^k\frac{2\lambda({E^j})}{\mu+\lambda(E)}=\frac{\lambda(E^0)}{\mu+\lambda(E)}+\frac{2\lambda({E})}{\mu+\lambda(E)}=\frac{\mu+\lambda(E)}{\mu+\lambda(E)}=1.$$
 
We illustrate the tree attack strategy by revisiting the network $Q$ with length $\mu =10$ from Figure~\ref{fig1}. We illustrate the attack probability at the leaf nodes and in $E^0$ in Figure~\ref{fig2} for different values of~$\alpha$.
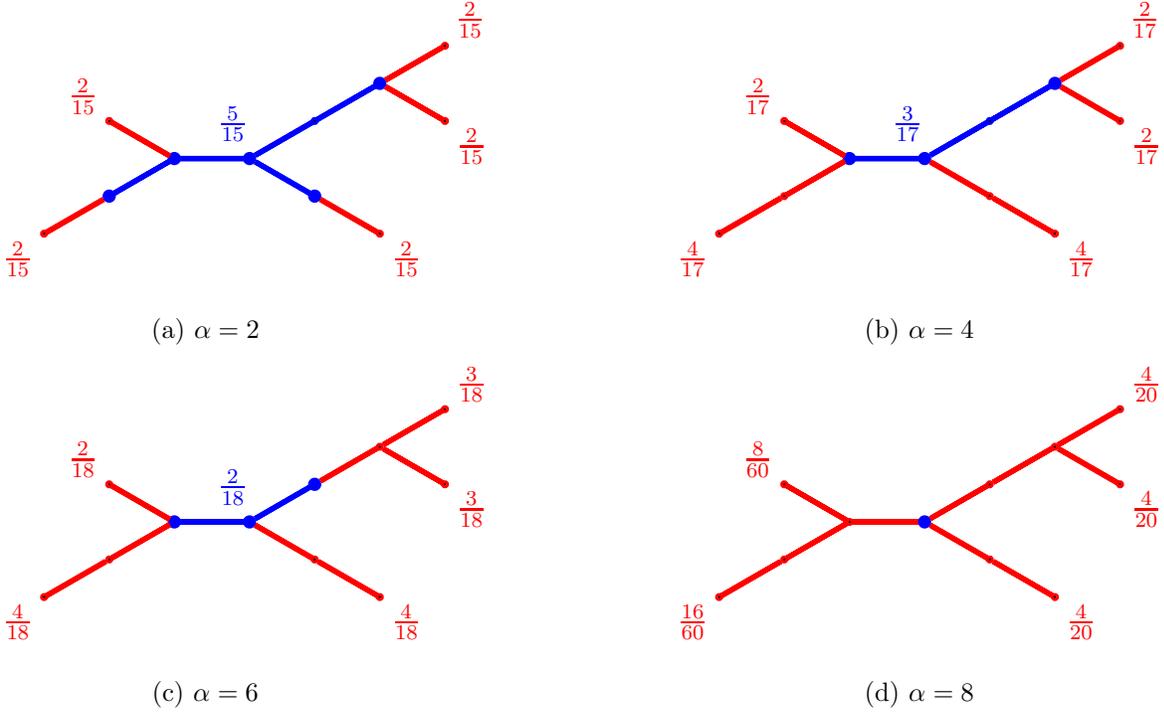
\begin{figure}[ht]
    \centering
    \begin{subfigure}[b]{0.35\textwidth}
         \centering
        \begin{tikzpicture}[line width=1pt]
    \tikzstyle{every node}=[draw,circle,fill=black,minimum size=1pt,
                            inner sep=0pt]
    \draw (0,0) [color=blue] node (a)[label=10:${\ \ \ \ \ \frac{5}{15}\ }$, color=blue, minimum size=3pt] {}
        -- ++(0:1cm) [color= blue, line width=2pt] node (0)[ color=blue, minimum size=3pt]  {}
        -- (a);
    \draw (0)
       -- ++(30:1cm) [color= blue, line width=2pt] node (12) {}
       -- ++(30:1cm) node (1) [ color=blue, minimum size=3pt]{}
       -- (0);
    \draw (0)
       -- ++(-30:1cm) [color= blue, line width=2pt] node (2) [color=blue, minimum size=3pt]{}
       -- (0);
    \draw (2)
       -- ++(-30:1cm) [color= red, line width=2pt] node (22) [label=-30:${\ \frac{2}{15}\ }$]{}
       -- (2);
    \draw (1)
       -- ++(30:1cm) [color= red, line width=2pt] node (3) [label=60:${\ \frac{2}{15}\ }$]{}
       -- (1);
    \draw (1)
       -- ++(-30:1cm) [color= red, line width=2pt] node (4) [label=-30:${\ \frac{2}{15}\ }$]{}
       -- (1);
    \draw (a)
     -- ++(150:1cm)[color= red, line width=2pt] node (5) [label=150:${\ \frac{2}{15}\ }$]{}
     -- (a);
    \draw (a)
     -- ++(-150:1cm)[color= blue, line width=2pt] node (6)[ color=blue, minimum size=3pt] {}
     -- (a);
    \draw(6)
     -- ++(-150:1cm)[color= red, line width=2pt] node (62) [label=-150:${\ \frac{2}{15}\ }$]{}
     -- (6);
\end{tikzpicture}
         \caption{$\alpha=2$}
         \label{fig:1}
     \end{subfigure}
    \hfill
    \begin{subfigure}[b]{0.5\textwidth}
         \centering
    \begin{tikzpicture}[line width=1pt]
    \tikzstyle{every node}=[draw,circle,fill=black,minimum size=1pt,
                            inner sep=0pt]
    \draw (0,0) [color=blue] node (a)[label=10:${\ \ \ \ \ \frac{3}{17}\ }$, color=blue, minimum size=3pt] {}
        -- ++(0:1cm) [color= blue, line width=2pt] node (0) [color=blue, minimum size=3pt] {}
        -- (a);
    \draw (0)
       -- ++(30:1cm) [color= blue, line width=2pt] node (12) {}
       -- ++(30:1cm) node (1) [color=blue, minimum size=3pt]{}
       -- (0);
    \draw (0)
       -- ++(-30:1cm) [color= red, line width=2pt] node (2) {}
       -- (0);
    \draw (2)
       -- ++(-30:1cm) [color= red, line width=2pt] node (22) [label=-30:${\ \frac{4}{17}\ }$]{}
       -- (2);
    \draw (1)
       -- ++(30:1cm) [color= red, line width=2pt] node (3) [label=60:${\ \frac{2}{17}\ }$]{}
       -- (1);
    \draw (1)
       -- ++(-30:1cm) [color= red, line width=2pt] node (4) [label=-30:${\ \frac{2}{17}\ }$]{}
       -- (1);
    \draw (a)
     -- ++(150:1cm)[color= red, line width=2pt] node (5) [label=150:${\ \frac{2}{17}\ }$]{}
     -- (a);
    \draw (a)
     -- ++(-150:1cm) [color= red, line width=2pt] node (6) {}
     -- (a);
    \draw(6)
     -- ++(-150:1cm)[color= red, line width=2pt] node (62) [label=-150:${\ \frac{4}{17}\ }$]{}
     -- (6);
\end{tikzpicture}
         \caption{$\alpha=4$}
         \label{fig:2}
     \end{subfigure}
     
    \begin{subfigure}[b]{0.35\textwidth}
         \centering
        \begin{tikzpicture}[line width=1pt]
    \tikzstyle{every node}=[draw,circle,fill=black,minimum size=1pt,
                            inner sep=0pt]
    \draw (0,0) [color= blue] node (a)[label=10:${\ \ \ \ \ \frac{2}{18}\ }$, color=blue, minimum size=3pt] {}
        -- ++(0:1cm) [color= blue, line width=2pt] node (0) [color=blue, minimum size=3pt] {}
        -- (a);
    \draw (0)
       -- ++(30:1cm) [color= blue, line width=2pt]node (1) [color=blue, minimum size=3pt]{}
       -- (0);
    \draw (1)
       -- ++(30:1cm) [color= red, line width=2pt] node (12) {}
       -- (1);
    \draw (0)
       -- ++(-30:1cm) [color= red, line width=2pt] node (2) {}
       -- (0);
    \draw (2)
       -- ++(-30:1cm) [color= red, line width=2pt] node (22) [label=-30:${\ \frac{4}{18}\ }$]{}
       -- (2);
    \draw (12)
       -- ++(30:1cm) [color= red, line width=2pt] node (3) [label=30:${\ \frac{3}{18}\ }$]{}
       -- (12);
    \draw (12)
       -- ++(-30:1cm) [color= red, line width=2pt] node (4)[label=-30:${\ \frac{3}{18}\ }$]{}
       -- (12);
    \draw (a)
     -- ++(150:1cm)[color= red, line width=2pt] node (5) [label=150:${\ \frac{2}{18}\ }$]{}
     -- (a);
    \draw (a)
     -- ++(-150:1cm) [color= red, line width=2pt] node (6) {}
     -- (a);
    \draw(6)
     -- ++(-150:1cm)[color= red, line width=2pt] node (62) [label=-150:${\ \frac{4}{18}\ }$]{}
     -- (6);
\end{tikzpicture}
         \caption{$\alpha=6$}
         \label{fig:3}
     \end{subfigure}
    \hfill
    \begin{subfigure}[b]{0.5\textwidth}
         \centering
         \begin{tikzpicture}[line width=1pt]
    \tikzstyle{every node}=[draw,circle,fill=black,minimum size=1pt,
                            inner sep=0pt]
    \draw (0,0) [color= blue] node (a) {}
        -- ++(0:1cm) [color= red, line width=2pt] node (0) [color=blue, minimum size=3pt] {}
        -- (a);
    \draw (0)
       -- ++(30:1cm) [color= red, line width=2pt] node (1) {}
       -- (0);
    \draw (1)
       -- ++(30:1cm) [color= red, line width=2pt] node (12) {}
       -- (1);
    \draw (0)
       -- ++(-30:1cm) [color= red, line width=2pt] node (2) {}
       -- (0);
    \draw (2)
       -- ++(-30:1cm) [color= red, line width=2pt] node (22) [label=-30:${\ \frac{4}{20}\ }$]{}
       -- (2);
    \draw (12)
       -- ++(30:1cm) [color= red, line width=2pt] node (3) [label=30:${\ \frac{4}{20}\ }$]{}
       -- (12);
    \draw (12)
       -- ++(-30:1cm) [color= red, line width=2pt] node (4) [label=-30:${\ \frac{4}{20}\ }$]{}
       -- (12);
    \draw (a)
     -- ++(150:1cm)[color= red, line width=2pt] node (5) [label=150:${\ \frac{8}{60}\ }$]{}
     -- (a);
    \draw (a)
     -- ++(-150:1cm) [color= red, line width=2pt] node (6) {}
     -- (a);
    \draw(6)
     -- ++(-150:1cm)[color= red, line width=2pt] node (62) [label=-150:${\ \frac{16}{60}\ }$]{}
     -- (6);
\end{tikzpicture}
         \caption{$\alpha=8$}
         \label{fig:4}
     \end{subfigure}
    \caption{The tree attack strategy on the tree network $Q$.}
    \label{fig2}
\end{figure}

Observe that the density $\rho_e(E^j)\equiv \rho(E^j)$ for any $j=1,\ldots,k$ is  $$\rho(E^j)=\frac{e(E^j)}{\lambda(E^j)}=\frac{2\lambda({E^j})}{\mu+\lambda(E)}\frac{1}{\lambda(E^j)}=\frac{2}{\mu+\lambda(E)}. $$

So, by Lemma~\ref{lem:EBD}, for any subtree $Z$ of  $E^j$ such that $x_j \in Z$, 
\begin{align} 
\rho(Z) \leq \rho(E^j)=\frac{2}{\mu+\lambda(E)}. \label{eq:rho}
\end{align}

\subsection{$\varepsilon$-Optimality of the Tree Attack Strategy} \label{sec:proof}

Before proving the tree patrolling conjecture, we extend a lemma from~\cite{alpern2022continuous} concerning the {\em uniform attack strategy}. This is the strategy for the attacker that begins the attack at an arbitrary time $M$ (for example $M=0$) at a point of the network chosen uniformly at random. \cite{alpern2022continuous} showed that this strategy ensures the attack will be intercepted with probability at most $\alpha/\mu$ (this was also shown in \cite{alpern2016patrolling} and \cite{garrec2019continuous}).

\begin{lemma} \label{lem:unif}
Let $Z$ be a connected subset of a network $Q$. Consider an attack strategy that chooses a point of $Z$ uniformly at random to carry out the attack, and starts the attack at some time $t$, which may be fixed or a random variable. Then for any Patroller strategy, the probability that attack is intercepted is at most $\alpha/\lambda(Z)$.
\end{lemma}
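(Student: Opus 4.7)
My plan is to reduce this lemma to the elementary observation that at unit speed, any segment of the patrol of duration $\alpha$ traces out a subset of the network of $\lambda$-measure at most $\alpha$. First, I would condition on an arbitrary realization of the Patroller's (possibly mixed) path $S$ together with the attack starting time $t$. In the simultaneous-move mixed-strategy framework, $S$ is independent of the Attacker's choices $(t,x)$, and conditional on $t$ the attack location $x$ is still uniform on $Z$ by construction.

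Since $S$ is 1-Lipschitz on $[t,t+\alpha]$, its image $S([t,t+\alpha])$ has $\lambda$-measure at most $\alpha$, and therefore $\lambda(S([t,t+\alpha]) \cap Z) \le \alpha$. Because $x$ is drawn uniformly on $Z$ independently of the Patroller, the conditional interception probability is
\[
\Pr\bigl[x \in S([t,t+\alpha]) \,\big|\, S,t\bigr] \;=\; \frac{\lambda(S([t,t+\alpha]) \cap Z)}{\lambda(Z)} \;\le\; \frac{\alpha}{\lambda(Z)}.
\]
Averaging over the randomness in $(S,t)$ preserves the inequality and gives the claimed bound.

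I do not anticipate any substantive obstacle: the argument is essentially the same as for the $Z=Q$ case already handled by \cite{alpern2022continuous}, and it generalises immediately because the crucial inequality $\lambda(S([t,t+\alpha]) \cap Z) \le \alpha$ holds no matter how large or small $Z$ is. The only points requiring a little care are the clean separation of the three independent sources of randomness (the Patroller's path, the Attacker's time, and the Attacker's location) and the measurability of the event $\{x \in S([t,t+\alpha])\}$, both of which are routine. I note in passing that connectedness of $Z$ plays no role in this argument; it is included in the hypothesis only because later applications of the lemma will take $Z$ to be a connected subtree arising from the subtree decomposition.
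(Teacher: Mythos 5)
Your proof is correct and takes essentially the same route as the paper's: condition on a fixed start time $t$, bound the interception probability by $\alpha/\lambda(Z)$, then average over any randomness in $t$. The only difference is that you establish the fixed-$t$ case directly from the observation that $\lambda(S([t,t+\alpha]))\le\alpha$, whereas the paper cites Lemma~1 of \cite{alpern2022continuous} applied to the network $Z$; your self-contained version has the minor advantage of making explicit that the patrol need not remain inside $Z$, and your remark that connectedness of $Z$ is not actually needed is accurate.
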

\begin{proof}
The lemma is trivially true if $\alpha \ge \lambda(Z)$, so assume that $\alpha < \lambda(Z)$. First suppose $t$ is fixed. Then Lemma~1 of \cite{alpern2022continuous} applied to the network $Z$ says that probability of interception is at most $\alpha/\lambda(Z)$. 

Now suppose $t$ is a random variable. Then from the previous paragraph, the probability the attack is intercepted, conditional on the attack starting at fixed time $t=t_0$ is at most $\alpha/\lambda(Z)$. It follows that the unconditional probability of interception is also at most $\alpha/\lambda(Z)$. 
\end{proof}

We are now ready to prove the tree patrolling conjecture.

\begin{theorem}
Let $Q$ be a tree of length $\mu$. Then for any $\varepsilon >0$, there exists a value of $T$ such that the tree attack strategy (with parameter $T$) cannot be intercepted with probability greater than $\alpha/(\mu+\lambda(E))+\varepsilon \equiv v^*+\varepsilon$. 
Hence, the value of the continuous patrolling game on $Q$ is $v^* $ and the $E$-patrolling strategy is optimal.
\end{theorem}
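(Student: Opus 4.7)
Fix an arbitrary Patroller strategy $S:[0,\infty)\to Q$. By Fubini, the interception probability of the tree attack strategy (parameter $T$) against $S$ is
$$P_I \;=\; \frac{1}{T}\int_0^T e(W_t)\,dt, \qquad W_t := S([t,t+\alpha]).$$
The plan is to show $P_I \le v^* + O(1/T)$ uniformly in $S$; combined with the lower bound $v^*$ from the $E$-patrolling strategy of \cite{alpern2022continuous}, this will give both the value of the game and the optimality of $E$-patrolling. Using the subtree decomposition I split $e(W_t) = e(W_t\cap E^0) + \sum_{j=1}^k e(W_t\cap E^j)$ and bound the core and extremity contributions separately in terms of the times $T_0$ and $T_j$ that $S$ spends in $E^0$ and $E^j$ respectively.

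The core contribution is the easy part. Because $e$ restricted to $E^0$ has constant density $1/(\mu+\lambda(E))$, Fubini reduces $\int_0^T e(W_t\cap E^0)\,dt$ to $(\mu+\lambda(E))^{-1}\int_{E^0}\tau(x)\,dx$, where $\tau(x):=\lambda\{t\in [0,T]:x\in W_t\}$; each passage of $S$ through a point $x$ in an arc contributes at most $\alpha$ to $\tau(x)$, so integrating over $E^0$ gives $\int_{E^0}\tau \le \alpha T_0$ and hence $\int_0^T e(W_t\cap E^0)\,dt \le \alpha T_0/(\mu+\lambda(E))$. For each extremity $E^j$ I will exploit that the local root $x_j$ is the unique gateway into $E^j$: partition $[0,T+\alpha]$ into maximal visits to $E^j$, each a closed sub-walk at $x_j$ of some duration $D_i$ whose image is a subtree $Z_i\subseteq E^j$ with $x_j\in Z_i$ and $\lambda(Z_i)\le D_i/2$, so by (\ref{eq:rho}) one has $e(Z_i)\le D_i/(\mu+\lambda(E))$. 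The target is then
$$\int_0^T e(W_t\cap E^j)\,dt \;\le\; \frac{\alpha T_j}{\mu+\lambda(E)} + O(\alpha^2),$$
which, summed with the core bound and combined with $T_0+\sum_{j\ge 1}T_j = T+O(\alpha)$, yields $P_I \le v^* + O(1/T)$; taking $T$ large then gives $P_I \le v^* + \varepsilon$, as required.

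I expect the main obstacle to be this per-extremity time-integrated bound. Since $e$ restricted to $E^j$ is supported on leaves, one has $\int_0^T e(W_t\cap E^j)\,dt = \sum_\ell e(\ell)\tau(\ell)$, and if every leaf of $Z_i$ were visited at most once per excursion, the naive inequality $\tau(\ell)\le V(\ell)\alpha$ combined with $\sum_\ell e(\ell)V(\ell) \le \sum_i e(Z_i) \le T_j/(\mu+\lambda(E))$ would immediately deliver the bound. The complication is that a Patroller may ``bounce'' arbitrarily rapidly at a leaf, making $V(\ell)$ unbounded; in that regime one must instead use the overlap estimate $\tau(\ell)\le D_i+\alpha$ coming from the heavy overlap of the $\alpha$-windows, together with the universal pointwise inequality $e(\ell)\le v^*$ (a consequence of $\lambda(E^j)\le\alpha/2$ and the formula for $e(\ell)$). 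Reconciling the sparse-visit and bouncing regimes leaf-by-leaf within a single excursion, so that no multiplicative loss is incurred when both kinds of leaves are present, is the delicate step and is where the full EBD structure of $h^j$ on $E^j$, via Lemma~\ref{lem:EBD}, will enter the argument.
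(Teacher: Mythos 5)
Your overall architecture is the same as the paper's: split the attack measure according to the subtree decomposition, handle the core by a uniform-density/occupation-time argument, decompose the patrol's time in each $E^j$ into excursions that start and end at the local root $x_j$, use the density bound (\ref{eq:rho}) on the excursion image $Z_i$ (which is a subtree containing $x_j$), absorb the first and last excursions into an $O(\alpha)$ error, and let $T\to\infty$. Your core bound and your bookkeeping $T_0+\sum_j T_j = T+O(\alpha)$ are fine.

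However, there is a genuine gap exactly where you say you expect one: the per-excursion estimate is stated as a ``target'' and the difficulty of reconciling sparse visits with rapid bouncing at leaves is acknowledged but not resolved, so the proposal is an outline rather than a proof. The paper closes this gap without any leaf-by-leaf analysis and without invoking the EBD structure beyond the single subtree density bound (\ref{eq:rho}) that you already have. For an interior excursion of duration $\delta_i$ whose image is the subtree $Z_i\ni x_j$, the walk is a closed walk at $x_j$ covering $Z_i$, so it traverses arcs for total time at least $2\lambda(Z_i)$ and can dwell at leaf nodes for total time at most $\delta_i-2\lambda(Z_i)$; since the attack measure on $E^j$ is supported on leaves, the set of start times interceptable during this excursion has $e$-weighted measure at most $e(Z_i)\bigl(\alpha+\delta_i-2\lambda(Z_i)\bigr)$. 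Combining with $e(Z_i)\le 2\lambda(Z_i)/(\mu+\lambda(E))$ from (\ref{eq:rho}), the whole estimate reduces to the elementary factorization
\begin{equation*}
\alpha\,\delta_i-2\lambda(Z_i)\bigl(\alpha+\delta_i-2\lambda(Z_i)\bigr)=\bigl(\alpha-2\lambda(Z_i)\bigr)\bigl(\delta_i-2\lambda(Z_i)\bigr)\ \ge\ 0,
\end{equation*}
which holds precisely because $\lambda(Z_i)\le\lambda(E^j)\le\alpha/2$ and $\delta_i\ge 2\lambda(Z_i)$; this gives $P_i\le \delta_i v^*/T$ with no error term at all for interior excursions. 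The first and last excursions need not be closed walks, so the slack there is only $\delta_i-\lambda(Z_i)$, and the same algebra gives $P_i\le(\delta_i+\alpha)v^*/T$, whence the total error is $2\alpha v^*/T$ and $T=3\alpha/\varepsilon$ suffices. So the missing step in your write-up is not a delicate two-regime leaf-by-leaf argument but a single aggregate inequality for the whole excursion plus a quadratic inequality in $\lambda(Z_i)$; until you supply that (or an equivalent), the central claim $\int_0^T e(W_t\cap E^j)\,dt\le \alpha T_j/(\mu+\lambda(E))+O(\alpha^2)$ is unproved.
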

\begin{proof} Let $\varepsilon>0$ be given, and suppose the Attacker uses the tree attack strategy (with parameter $T$), for some $T$, where the precise value of $T$ will be specified later. Consider an arbitrary patrol $S$, and let $0=t_0 < t_1 <\cdots <t_m=T+\alpha$ be the coursest partition of $[0,T+\alpha]$ such that $S$ is confined to a single set $E^j$ ($j=0,1,\ldots,k$) during each time interval $[t_i,t_{i+1}]$. For $i=1,\ldots,m$, let $I_i=[t_{i-1},t_i]$, let $\delta_i = t_i-t_{i-1}$ and let $Z_i=S(I_i)$.

We will show that the probability $P(S)$ that $S$ intercepts the tree attack strategy is at most $v^*+\varepsilon$. To do so, we will calculate an upper bound for the probability $P_i$ that $S$ intercepts the attack during each of the intervals $I_i$ for each $i=1,\ldots,m$, and we will show that the sum of these upper bounds is no more than $v^*+ \varepsilon$. 

First suppose $m=1$. In this case, the patrol just stays in one component $E^j$ during the whole time $[0, T+\alpha] \equiv I_1.$ If $j\neq0$, the interception probability $P_i$ is at most $$e(E^j)=2\lambda(E^j)/(\mu+\lambda(E))\leq \alpha/(\mu+\lambda(E))=v^*,$$ since $\lambda(E^j) \le \alpha/2$. If $j=0$, then by Lemma~\ref{lem:unif}, then the interception probability $P_i$ satisfies 
$$P_i \le \frac{\alpha}{\lambda(E^0)}\cdot e(E^0) =\frac{\alpha}{\lambda(E^0)} \cdot \frac{\lambda(E^0)}{\mu+\lambda(E)}=v^*.$$

Now suppose $m\geq 2$, and we calculate an upper bound of interception probability $P_i$ in three cases: 
\begin{enumerate}
   \item[(i)] $Z_i \subseteq E^0$;
   \item[(ii)]  $i=2,.., m-1$ and $Z_i \subseteq E^j$ for some $j=1,\ldots,k$;
   \item[(iii)] $i=1 \text{ or } m$ and $Z_i \subseteq E^j$ for some $j=1,\ldots,k$. 
\end{enumerate}

Starting with case (i), when $Z_i \subseteq E^0$, the interception probability $P_i$ is no greater than the product of the probability the attack starts in the interval $I_i$, the probability the attack takes place in $E^0$ and the conditional probability that $S$ intercepts the attack given that it takes place in $E^0$ starting during $I_i$. Using Lemma~\ref{lem:unif}, this gives the bound
\begin{align}\label{eq1}
   P_i \le   \frac{\delta_i}{T}  \cdot e(E^0) \cdot  \frac{ \alpha }{\lambda(E^0)} = \frac{\delta_i v^* }{T}.
\end{align}

Second, in the case that $i=2,.., m-1$ and $Z_i \subseteq E^j$ for some $j=1,\ldots,k$, the patrol must perform a tour with the startpoint and endpoint $x_j$. Because the length of this tour is at least $2\lambda(Z_i)$, the patrol can spend at most time $\delta_i- 2\lambda(Z_i) \geq 0$ at leaf nodes of $E^j$. Therefore, $P_i$ satisfies
$$P_i \le \frac{1}{T} e(Z_i)(\alpha+\delta_i-2\lambda(Z_i)).$$
By~(\ref{eq:rho}), $\rho(Z_i) = e(Z_i)/\lambda(Z_i) \leq \rho(E^j)=2/(\mu+\lambda(E))$. Applying this to the inequality above and rearranging, 
\begin{align*}
P_i & \le \frac{1}{T} e(Z_i)(\alpha+\delta_i-2\lambda(Z_i)) \\
&\leq \frac{1}{T} \frac{2\lambda(Z_i)}{\mu+\lambda(E)}(\alpha+\delta_i-2\lambda(Z_i)) \\
&=\frac{1}{T}\frac{\alpha}{\mu+\lambda(E)}\left(\delta_i- \big (\delta_i-2\lambda(Z_i)\big)\left(1-\frac{2\lambda(Z_i)}{\alpha} \right)\right).
\end{align*}
As already observed, $\delta_i-2\lambda(Z_i) \geq 0$. Also, $\lambda(Z_i) \leq \lambda (E^j) \leq \alpha/2$, by definition of the subtree decomposition, so $\big (\delta_i-2\lambda(Z_i)\big)\big(1-\frac{2\lambda(Z_i)}{\alpha} \big) \geq 0$. Consequently,
\begin{equation}\label{eq2}
    P_i \leq \frac{1}{T}\frac{\alpha}{\mu+\lambda(E)}\delta_i = \frac{\delta_i v^* }{T}.
\end{equation}

Third, we consider the case that $i=1 \text{ or } m$ and $Z_i \subseteq E^j$ for some $j=1,\ldots,k$. This case is different from the second case since it is not necessary for the patrol to perform a tour in $E^j$. For example, the patrol may start at a leaf node in $Z_1$, stay within $Z_1$ for sometime then move directly to $Z_2$. Therefore, the time the patrol can stay at leaf nodes in $E^j$ is at most $\delta_i-\lambda(Z_i) \geq 0$, and the interception probability $P_i$ satisfies
$$P_i \le \frac{1}{T} e(Z_i)(\alpha+\delta_i-\lambda(Z_i)).$$

The condition $\rho(Z_i) \leq \alpha/(\mu+\lambda(E))$ still holds since $Z_i$ contains $x_j$, and must therefore be a subtree of $E^j$. Applying this to the inequality above and rearranging,
\begin{align*}
P_i &\leq \frac{1}{T}\frac{2\lambda(Z_i)}{\mu+\lambda(E)}(\alpha+\delta_i-\lambda(Z_i)) \\
&=\frac{1}{T}\frac{\alpha}{\mu+\lambda(E)}\left(\delta_i+2\lambda(Z_i)-\left(1-\frac{2\lambda(Z_i)}{\alpha}\right)\delta_i-\frac{2(\lambda(Z_i))^2}{\alpha}\right).
\end{align*}

Since $\lambda(Z_i) \leq \lambda (E^j) \leq \alpha/2$, we have $(1-2\lambda(Z_i)/\alpha)\delta_i\geq 0$ and
\begin{equation}\label{eq3}
   P_i \leq \frac{1}{T}\frac{\alpha}{\mu+\lambda(E)}\big(\delta_i+2\lambda(Z_i)\big)\leq\frac{1}{T}\frac{\alpha}{\mu+\lambda(E)}(\delta_i+\alpha)= \frac{(\delta_i+\alpha) v^* }{T}.
\end{equation}

Combining inequalities (\ref{eq1}) - (\ref{eq3}), we obtain
$$P(S) \le \sum_{i=1}^m P_i \le \frac{2\alpha v^* }{T} +\sum_{i=1}^m \frac{\delta_i v^* }{T} =  \frac{2\alpha v^* }{T} + \frac{v^* }{T}(T+\alpha) \leq v^*+\varepsilon,$$
where we choose $T=3\alpha/\varepsilon$.

We have shown that the tree attack strategy cannot be intercepted with probability greater than $v^*+\varepsilon$, so that the value of the game is at most $v^*+\varepsilon$. Combining this with the lower bound of $v^*$ from \cite{alpern2022continuous} given by the $E$-patrolling strategy, the rest of the theorem follows. 
\end{proof}

\section{Solving the Game for Complete Networks} \label{sec:complete}
In this section, we study the game on complete networks. We begin this section by introducing some standard definitions and the concept of a $k$-factorization of complete networks in Subsection 5.1. In Subsection 5.2, we introduce a Patroller strategy which we call the \textit{complete network patrolling strategy} and show that this strategy is optimal for some values of $\alpha$.  
\subsection{$k$-factorization of Complete Networks}\label{subsec:1factorization}
In this section, we just consider simple networks (i.e networks that do not contain any loops and for which there is at most one arc connecting any pair of nodes). A \textit{k-regular network} is a simple network all of whose nodes have degree $k$ ($k \geq 1$). A \textit{complete network} is a $k$-regular network on $m$ ($m\geq 2$) nodes where $k = m-1$. We denote a complete network with $n$ nodes by $K_n$. Note that since the arcs of $K_n$ may have different lengths, it is not uniquely defined.

We denote the set of arcs of a network $Q$ by $E(Q)$ and the set of nodes of $Q$ by $V(Q)$. 
\begin{definition} Let $Q$ be a $k$-regular network. A \textit{k-factorization} of $Q$ is a set of sub-networks $F=\{F_1,\ldots,F_z\}$ such that 
\begin{enumerate}
\item[(i)] for all $i=1,\ldots,z$, the sub-network $F_i$ is a $k$-regular network with $V(F_i)=V(Q)$,
\item[(ii)] $E(Q)=\cup_{i=1}^z E(F_i)$ and 
\item[(iii)] for any $1 \leq i\neq j \leq z$, we have $E(F_i) \cap E(F_j) =\emptyset$.
\end{enumerate}
\end{definition}

In particular, a 1-factorization of $Q$ is a set of arc-disjoint perfect matchings whose union is $E(Q)$. In other words, a 1-factorization is an arc-coloring of a network where each color class consists of a perfect matching. In Figure~\ref{fig:factorization}, we illustrate a 1-factorization of a complete network $K_4$ on four nodes with three color classes, red, blue, and green.  
\begin{figure}[!ht]
    \centering
    \begin{tikzpicture}[line width=1pt]
    \tikzstyle{every node}=[draw,circle,fill=black,minimum size=1pt,
                            inner sep=0pt]
    \draw (0,0)[color = red] node (v4){}
        -- ++(90:2cm) node (v1){}
        -- (v4);
    \draw (v4)[color = blue]
       -- ++(-30:2cm) node (v3){}
       -- (v4);
    \draw (v4)[color = green]
       -- ++(-150:2cm) node (v2){}
       -- (v4);
    \draw [color = red](v2) -- (v3);
    \draw [color = blue](v1) -- (v2);
    \draw [color = green](v1) -- (v3);
\end{tikzpicture}
\caption{1-factorization of a complete network $K_4$.}
    \label{fig:factorization}
\end{figure}
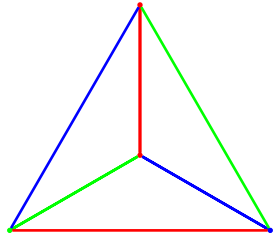
 
It is obvious that if $Q$ has a 1-factorization, the number of nodes of $Q$ must be even. It is well-known that every complete network on $2n$ ($n=1,\ldots$) nodes admits a 1-factorization. \cite{csaba} showed that every $k$-regular network on $2n$ nodes has a 1-factorization if $k \geq 2\lceil n/2 \rceil -1$. For a small number of nodes $2n\leq 4$, the network $K_{2n}$ has a unique 1-factorization. When $2n \geq 6$, the network $K_{2n}$ has many 1-factorizations \citep[see][]{zino2014}. For example, $K_8$ has 6240 distinct 1-factorizations. 

\subsection{A Patrolling Strategy for Complete Networks} \label{subsec:complete}
In this subsection, we introduce a Patroller strategy for the complete network $K_{2n}$ on $2n$ ($n=1,2,\ldots$) nodes and prove this strategy is optimal for some values of $\alpha$. Note that a complete network on an odd number of nodes is Eulerian. The solution for Eulerian networks was presented in \cite{garrec2019continuous} and \cite{alpern2022continuous}.

Suppose the complete network $K_{2n}$ has a 1-factorization $F=\{F_1,\ldots,F_{2n-1}\}$. We first observe that for any $i=1,\ldots, 2n-1$, the sub-network $Q_i= K_{2n}- F_i$ is a $k$-regular network where $k= 2n-2$. Therefore, $Q_i$ is Eulerian for all $i$. We define the complete network patrolling strategy below. 

\begin{definition}[complete network patrolling strategy]
Let $F=\{F_1,\ldots,F_{2n-1}\}$ be a 1-factorization of a complete network $K_{2n}$. For $i=1,\ldots,2n-1$, let $Q_i= K_{2n}- F_i$ and let $S_i$ be an Eulerian tour of $Q_i$ starting at a randomly chosen point. The complete network patrolling strategy $S^F$ is a patrol such that the Patroller chooses $S_i$ with probability $s_i= \lambda(Q_i)/((2n-2)\mu)$.
\end{definition}

Note that $S^F$ is well defined, since
$$\sum_{i=1}^{2n-1} s_i = \frac{\sum_{i=1}^{2n-1} \lambda(Q_i)}{(2n-2)\mu}=\frac{\sum_{i=1}^{2n-1} \big ( \lambda(K_{2n})- \lambda(F_i) \big )}{(2n-2)\mu}=\frac{\sum_{i=1}^{2n-1} \lambda(K_{2n})- \sum_{i=1}^{2n-1} \lambda(F_i)}{(2n-2)\mu}=\frac{(2n-1)\mu-\mu}{(2n-2)\mu}=1. $$

For a $1$-factorization $F$, let $\delta(F) = \max_{1 \leq i \leq 2n-1} \lambda(F_i)$. We have the following result.
\begin{proposition}\label{lem:complete}
    Let $F$ be a 1-factorization of the complete network $K_{2n}$ for some $n=1,2,\ldots$. For $\alpha \leq \mu - \delta(F)$, the strategy $S^F$ is optimal for the Patroller and the uniform attack strategy is optimal for the Attacker on $K_{2n}$. The value of the game is $V= \alpha/\mu$.
\end{proposition}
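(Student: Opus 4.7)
The plan is to establish the two inequalities separately. For the upper bound, the uniform attack strategy (recalled at the beginning of Subsection~\ref{sec:proof}) guarantees that the Patroller cannot intercept with probability exceeding $\alpha/\mu$ on any network, so $V \le \alpha/\mu$. It therefore remains to show that $S^F$ achieves interception probability at least $\alpha/\mu$ against every pure Attacker strategy, i.e.\ for every attack point $x \in K_{2n}$ and every fixed start time $t \ge 0$. This would simultaneously certify that $S^F$ is optimal for the Patroller and that the uniform attack strategy is optimal for the Attacker, with value $\alpha/\mu$.

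The key tool is a standard hitting-probability fact for periodic Eulerian tours. Under the hypothesis $\alpha \le \mu - \delta(F)$ we have $\alpha \le \lambda(Q_i)$ for every $i$. Since each $S_i$ is a unit-speed Eulerian tour of $Q_i$ with uniformly random starting phase and period $\lambda(Q_i)$, any fixed regular point $y \in Q_i$ is visited exactly once per period, so the probability that $S_i$ is at $y$ at some moment of any interval of length $\alpha$ equals $\alpha/\lambda(Q_i)$. If instead $y$ is a node of $Q_i$ then $y$ is visited $\deg_{Q_i}(y)/2 = n-1$ times per period, and the same probability is at least $\alpha/\lambda(Q_i)$.

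Now suppose first that $x$ is regular. Then $x$ lies in the interior of a unique arc, which belongs to exactly one color class $F_j$, so $x \in Q_i$ precisely for the $2n-2$ indices $i \ne j$. Combining with the mixing weights gives interception probability
$$\sum_{i \ne j} s_i \cdot \frac{\alpha}{\lambda(Q_i)} \;=\; \sum_{i \ne j} \frac{\lambda(Q_i)}{(2n-2)\mu} \cdot \frac{\alpha}{\lambda(Q_i)} \;=\; \frac{(2n-2)\alpha}{(2n-2)\mu} \;=\; \frac{\alpha}{\mu},$$
where the $\lambda(Q_i)$ factors cancel by design of $s_i$. If $x$ is a node, then $x \in V(Q_i) = V(K_{2n})$ for all $2n-1$ values of $i$, and the analogous bound gives interception probability at least $(2n-1)\alpha / ((2n-2)\mu) > \alpha/\mu$. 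This establishes $V \ge \alpha/\mu$ and completes the optimality claims.

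The main (and essentially only) obstacle is to justify rigorously the Eulerian-tour hitting identity: a uniform random phase on a periodic unit-speed parameterization of $S_i$ must make the time-of-visit of any fixed regular point uniformly distributed modulo $\lambda(Q_i)$, from which the exact value $\alpha/\lambda(Q_i)$ (and the node lower bound) follows whenever $\alpha \le \lambda(Q_i)$. Once that standard fact is recorded, the proof collapses to the one-line telescoping above, which is engineered by the choice of $s_i$ precisely so that all dependence on the individual arc-length distributions of the factors $F_i$ vanishes.
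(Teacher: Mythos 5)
Your proposal is correct and follows essentially the same route as the paper's proof: the same case split on whether $x$ is a node or a regular point, the same telescoping cancellation of the $\lambda(Q_i)$ factors via the choice of $s_i$, and the same appeal to the uniform attack strategy for the matching upper bound. The only difference is that the paper obtains the hitting identity $P(S_i)=\alpha/\lambda(Q_i)$ by citing Corollary~1 of \cite{alpern2022continuous}, whereas you sketch the uniform-phase argument for periodic Eulerian tours directly; both are fine.
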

\proof
Consider an arbitrary attack taking place at some point $x \in Q$. We will show that the patrol $S^F$ can intercept this attack with probability at least $\alpha/\mu$. 

Let $P(S_i)$ be the probability that $S_i$ intercepts the attack, for $i=1,\ldots,2n-1$. For each $i$, we have $\lambda(Q_i)=\mu-\lambda(F_i) \geq \mu -\delta(F) \geq \alpha$. It follows from Corollary 1 of \cite{alpern2022continuous} that $P(S_i)=\alpha/\lambda(Q_i)$.

If $x$ is a node, it is easy to see that $x \in Q_i$ for all $i$. So, the patrol $S^F$ will intercept the attack with probability $$P(S^F) = \sum_{i=1}^{2n-1} s_i P(S_i)= \sum_{i=1}^{2n-1} \frac{\lambda(Q_i)}{(2n-2)\mu}\frac{\alpha}{\lambda(Q_i)}= \frac{2n-1}{2n-2}\frac{\alpha}{\mu} \geq \frac{\alpha}{\mu}.$$

If $x$ is not a node, there exists a unique $j \in [2n-1]$ such that $x \in F_j$ (where $[m]$ denotes the set $\{1,\ldots,m\}$). So, $x \not \in Q_j$ and $x \in Q_i$ for all $i \neq j$. The probability the patrol $S^F$ intercepts the attack is $$P(S^{F}) = \sum_{\substack{i=1}}^{2n-1} s_i P(S_i)= \sum_{\substack{i\in [2n-1] \\ i\neq j}} \frac{\lambda(Q_i)}{(2n-2)\mu}\frac{\alpha}{\lambda(Q_i)}= \frac{2n-2}{2n-2}\frac{\alpha}{\mu} = \frac{\alpha}{\mu}.$$ 

We have shown that $V \ge \alpha/\mu$. But by Lemma 1 of \cite{alpern2022continuous} (or Lemma~\ref{lem:unif} of this paper), the uniform attack strategy guarantees that $V \le \alpha/\mu$ for any network. We conclude that the strategy $S^F$ and the uniform attack strategy are optimal and the value of the game is $V= \alpha/\mu$.  
\endproof

As mentioned in Subsection \ref{subsec:1factorization}, when the number of nodes $2n \geq 6$, the network $K_{2n}$ has many 1-factorizations. Let $F^*$ be a 1-factorization of $K_{2n}$ such that $\delta^*= \delta(F^*) \leq \delta(F)$ for any 1-factorization $F$ of $Q$. We then have the following stronger result.
\begin{proposition}
For $\alpha \leq \mu - \delta^*$, the strategy $S^{F^*}$ and the uniform attack strategy are optimal. The value of the game is $V= \alpha/\mu$.
\end{proposition}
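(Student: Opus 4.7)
The plan is to observe that this proposition is essentially an immediate corollary of Proposition~\ref{lem:complete}, with the only genuine content being the optimization over the choice of 1-factorization. Since $K_{2n}$ has only finitely many 1-factorizations (it is a finite combinatorial object), and the arc lengths are fixed real numbers, the function $F \mapsto \delta(F) = \max_{1 \le i \le 2n-1} \lambda(F_i)$ attains its minimum on this finite set, so a minimizing $F^*$ exists and $\delta^*$ is well-defined.

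The core of the argument is then straightforward: I would simply apply Proposition~\ref{lem:complete} with the specific 1-factorization $F = F^*$. That proposition states that whenever $\alpha \le \mu - \delta(F)$, the strategy $S^F$ and the uniform attack strategy are optimal, with value $\alpha/\mu$. Taking $F = F^*$ gives the conclusion for the range $\alpha \le \mu - \delta^*$, which by the minimality of $\delta^*$ is the widest range of $\alpha$ obtainable this way. Because $\delta^* \le \delta(F)$ for every 1-factorization $F$, we have $\mu - \delta^* \ge \mu - \delta(F)$, so the range $\alpha \le \mu - \delta^*$ strictly extends (or at least matches) the range guaranteed by any individual 1-factorization, which is exactly the strengthening claimed.

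There is no real obstacle here beyond citing the previous result; the proposition is purely a statement that among all 1-factorizations one should choose the ``most balanced'' one in order to maximize the range of $\alpha$ for which the Patroller's mixture $S^{F^*}$ remains optimal. For completeness, I would briefly verify the hypothesis of Proposition~\ref{lem:complete} for $F^*$: each $Q_i = K_{2n} - F_i^*$ is $(2n-2)$-regular and hence Eulerian, and for $\alpha \le \mu - \delta^* \le \mu - \lambda(F_i^*) = \lambda(Q_i)$ the quantity $P(S_i) = \alpha/\lambda(Q_i)$ from Corollary~1 of \cite{alpern2022continuous} is valid for every $i$. The matching upper bound $V \le \alpha/\mu$ comes, as before, from Lemma~\ref{lem:unif} applied to the uniform attack strategy on $Z = Q$. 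Combining the two bounds yields $V = \alpha/\mu$ and optimality of both strategies.
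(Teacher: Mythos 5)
Your proposal is correct and matches the paper's intent exactly: the paper states this proposition without proof, treating it as an immediate consequence of Proposition~\ref{lem:complete} applied to the minimizing 1-factorization $F^*$, which is precisely your argument. Your additional remarks (existence of $F^*$ by finiteness of the set of 1-factorizations, and verification of the hypotheses for $F^*$) are sound and only make the implicit step explicit.
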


In comparison with the recent work of \cite{alpern2022continuous}, the complete network patrolling strategy $S^{F}$ helps us solve the game for a significantly larger range of $\alpha$. \cite{alpern2022continuous} introduced a patrolling strategy for networks without leaf arcs and proved it is optimal for $\alpha \leq g$ where $g$ is the girth of the network, defined as the minimum length of a circuit in the network. For a complete network $K_{2n}$ ($n \geq 2$), the girth $g$ is very small compared to $\mu - \delta(F)$, for any 1-factorization $F$. In fact, we have $\mu - \delta(F) \geq \frac{n(n-1)}{2}g$. Indeed, by Theorem 1 of \cite{alspach2001decompostion}, any sub-network $Q_i = K_{2n} - F_i$ ($F_i \in F$) can be decomposed into circuits $C_4$ of four arcs. Since $|E(Q_i)| =2n(n-1),$ a $C_4$-decomposition of $Q_i$ has $n(n-1)/2$ circuits and the length of any circuit is not less than $g$ by definition of $g$. So, $\lambda(Q_i) \geq \frac{n(n-1)}{2}g$. Therefore, $\mu- \delta(F) \geq \min_i \lambda(Q_i) \geq \frac{n(n-1)}{2}g$ for any 1-factorization $F$.

In summary, the patrolling strategy of \cite{alpern2022continuous} is known to be optimal for values of $\alpha$ in $(0,g]$, whereas the complete network patrolling strategy is known to be optimal for values of $\alpha$ in $(0,\frac{n(n-1)}{2}g]$, an interval that is $O(n^2)$ longer.

Notice that if $Q$ is a network with unit length arcs (i.e every arc is of length 1), then $\delta(F) = n$ for all 1-factorizations $F$, so that $\delta^* = n$. Thus, for $\alpha \leq \mu - n$, the value of the game is $\alpha/\mu$. In fact, this bound can be tight. In other words, for some networks, for $\alpha> \mu - \delta^*$, the value of the game is strictly less than $\alpha/\mu$. 

\begin{proposition} \label{prop:tightness} Consider an attack strategy for the network $K_4$ with unit length arcs which attacks at a random point with a start time chosen uniformly at random from the interval $[0, 6-\alpha]$. For $\mu - n=4 <\alpha \leq 6$, this attack strategy guaranees an interception probability of stricty less than $\alpha/\mu$. 
\end{proposition}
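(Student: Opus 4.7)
The plan is to express the interception probability against any Patroller path $S$ as
$$
P(S) = \frac{1}{6(6-\alpha)}\int_0^{6-\alpha} L(t)\,dt, \qquad L(t) := \lambda(S([t,t+\alpha])),
$$
and show the integral is strictly less than $\alpha(6-\alpha)$. The trivial bound $L(t)\le\alpha$ yields only $P(S)\le\alpha/6$, so strict improvement is needed. The essential structural fact is that $K_4$ has four vertices all of odd degree $3$, and hence no Eulerian trail, so every unit-speed walk of length $\mu=6$ on $K_4$ has strictly positive self-overlap, i.e., $\lambda(S([0,6]))\le 5$.

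When $\alpha\in(5,6]$ the argument is short: $L(t)\le\min(\alpha,\lambda(S([0,6])))\le 5<\alpha$ for every $t\in[0,6-\alpha]$, so $\int_0^{6-\alpha}L(t)\,dt\le 5(6-\alpha)<\alpha(6-\alpha)$. For the substantive case $\alpha\in(4,5]$, the equality $L(t)=\alpha$ can hold pointwise, so more care is required. I will use that $L$ is Lipschitz in $t$ with constant $2$ (shifting $t$ by $\varepsilon$ changes $S([t,t+\alpha])$ by a set of measure at most $2\varepsilon$), so it suffices to rule out $L(t)=\alpha$ identically on $[0,6-\alpha]$. Assuming this identity for contradiction, every $\alpha$-window of $S$ has injective trace, which forces any two revisit times $s_1<s_2$ of a common point of $K_4$ to satisfy $s_2-s_1>\alpha$ (otherwise both lie in a common window, destroying injectivity). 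Combined with $\lambda(S([0,6]))\le 5$, which forces at least one unit of arc-length to be revisited, every revisit pair must further satisfy $s_1<6-\alpha$ and $s_2>\alpha$.

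The final step is a structural analysis inside $K_4$. Restricting first to walks that traverse six successive full-arc segments $a_1,\ldots,a_6$, injectivity of the windows $[0,\alpha]$ and $[6-\alpha,6]$ forces $\{a_1,\ldots,a_5\}$ and $\{a_2,\ldots,a_6\}$ each to be a set of five distinct arcs of $K_4$, so $a_6\in\{a_1,e_0\}$ where $e_0$ is the unique arc of $K_4$ not in $\{a_1,\ldots,a_5\}$. The alternative $a_6=e_0$ would make $S|_{[0,6]}$ an Eulerian trail of $K_4$, which is impossible; hence $a_6=a_1$, i.e., $S$ retraces its first arc at the end. An Eulerian-trail argument inside the four-vertex multigraph $K_4-\{a_1,e_0\}$ then forces $e_0$ to be the arc of $K_4$ opposite $a_1$ (so $K_4-\{a_1,e_0\}$ is all-even-degree) and $a_6$ to traverse $a_1$ in the direction opposite to $a_1$. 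A direct geometric computation gives $L(t)=4+\max(1-t,\,t+\alpha-5)$ on $t\in(5-\alpha,1)\subseteq[0,6-\alpha]$, which is strictly less than $\alpha$ on the interior of this nontrivial subinterval, contradicting $L\equiv\alpha$. The main obstacle will be this structural case analysis, together with extending it to the few non-generic walks that begin or end in the interior of an arc; I expect this extension to follow from a short shift-of-origin argument reducing to the case of walks starting and ending at nodes.
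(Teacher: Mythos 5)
Your high-level architecture matches the paper's: writing $P(S)=\frac{1}{6(6-\alpha)}\int_0^{6-\alpha}L(t)\,dt$ with $L(t)=\lambda(S([t,t+\alpha]))$ is the attack-time-indexed form of the paper's patrol-time decomposition into $[0,6-\alpha]$, $[6-\alpha,\alpha]$ and $[\alpha,6]$, and reducing strictness to ruling out $L\equiv\alpha$ is exactly the paper's conditions (i)--(ii); your parity argument pinning the six-full-arc case down to ``$a_6$ retraces $a_1$ in reverse with $e_0$ the opposite arc'' is a tidier version of the paper's enumeration of $w_1,w_2,w_3$. But two steps are genuinely broken or missing. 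First, the ``essential structural fact'' $\lambda(S([0,6]))\le 5$ is false: take the unit-speed path $v_1\to v_2\to v_3$, then halfway along the arc $v_3v_4$ and back to $v_3$, then $v_3\to v_1\to v_4\to v_2$; it has duration $6$ and covers measure $5.5$. The absence of an Eulerian trail only yields $\lambda(S([0,6]))<6$. This invalidates your entire argument for $\alpha\in(5,6)$ (the bound $L(t)\le 5<\alpha$ fails for $\alpha\le 5.5$), and also the ``at least one unit of arc-length is revisited'' step. The conclusion is salvageable --- your substantive-case analysis extends to all $\alpha\in(4,6)$ with only the $t$-interval adjusted, since for $\alpha>5$ the formula $L(t)=4+\max(1-t,\,t+\alpha-5)$ is below $\alpha$ on all of $[0,6-\alpha]$ --- but as written the case split rests on a false lemma.

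Second, the non-generic case is not proved, only hoped for. Under $L\equiv\alpha$ the patrol moves at unit speed and cannot reverse direction at an interior point of an arc, so besides the six-full-arc walks you must handle walks of the form (partial arc) + (five full arcs) + (partial arc); this is precisely the case the paper disposes of by a separate explicit enumeration (its ``patrol starts at a regular point $y$'' case). A shift of origin does not obviously reduce it to the node-to-node case: re-parametrizing the patrol also shifts the family of attack windows, and the extreme windows $[0,\alpha]$ and $[6-\alpha,6]$, which carry all the combinatorial information in your argument, are not windows of the shifted walk. This case needs its own finite analysis. Two minor further points: the degenerate case $\alpha=6$ needs a separate sentence, since $[0,6-\alpha]$ is a single point and your normalization divides by $6-\alpha$; and ``injective trace'' only gives injectivity up to null sets, so it rules out revisited arc segments but not revisited vertices --- harmless where you actually use it, but the blanket claim about all revisit pairs is too strong.
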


The proof of Proposition~\ref{prop:tightness} is in the Appendix.

\begin{remark}
Proposition~\ref{lem:complete} can be extended to general $k$-regular networks $Q$ on $2n$ ($n \geq 2$) nodes. We consider the case $k \geq n+1$ and $k$ is odd. From Subsection \ref{subsec:1factorization}, we know $Q$ admits a 1-factorization $F=\{F_1,\ldots,F_{k}\}$. Also, for all $i=1,\ldots,k$, the sub-network $Q_i=Q-F_i$ is Eulerian. It is well known that a network $Q$ all of whose nodes have degree at least $|V(Q)|/2$ is connected. So $Q_i$ must be connected because all its nodes are of degree $k-1 \geq n$. Let $S_i$ be an Eulerian tour of $Q_i$ which starts at a random point. Let $S$ be a patrolling strategy which chooses $S_i$ with probability $\lambda(Q_i)/\sum_{j=1}^{k}\lambda(Q_j)$. Similarly to the proof of Proposition \ref{lem:complete}, it is easy to show that the strategy $S$ is optimal and the value of the game is $\alpha/\mu$ for $\alpha \leq \mu - \delta(F)$. 
\end{remark}

In general, if we know a $k$-regular network has an $m$-factorization, we can generalize Proposition~\ref{lem:complete} as follows. 

\begin{theorem} \label{thm:mfact}
Let $Q$ be a $k$-regular network on $2n$ vertices such that $n\geq 2$ and $k$ is odd. Assume $Q$ admits an $m$- factorization $F_m$ for some odd $m$ such that $k\geq n+m$. Then, the value of the game is $V=\alpha/\mu$ for $\alpha \leq \mu -\delta(F)$. 
\end{theorem}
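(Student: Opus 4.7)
The plan is to follow the template of Proposition~\ref{lem:complete} almost verbatim, simply replacing the $1$-factorization by the given $m$-factorization $F_m = \{F_1, \ldots, F_z\}$, where $z = k/m$. For each $i$ I would set $Q_i = Q - F_i$ and let $S_i$ be an Eulerian tour of $Q_i$ started at a uniformly chosen point, and then define the patrol $S$ as the mixture that plays $S_i$ with probability $s_i = \lambda(Q_i)/\sum_{j=1}^z \lambda(Q_j)$. Since the $F_j$ partition $E(Q)$, the denominator simplifies to $(z-1)\mu$, so $S$ is well-defined.

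First I would verify that each $Q_i$ is genuinely Eulerian, so that $S_i$ exists. As $Q$ is $k$-regular and $F_i$ is $m$-regular with both $k$ and $m$ odd, $Q_i$ is $(k-m)$-regular with $k-m$ even, so every vertex has even degree. Connectivity follows from the hypothesis $k\geq n+m$, which gives minimum degree $k-m \geq n = |V(Q)|/2$ in the simple network $Q_i$; the Dirac-type observation already invoked in the Remark preceding the theorem then yields connectedness.

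Once $S$ is in hand I would mimic the case analysis from Proposition~\ref{lem:complete}. The bound $\alpha \leq \mu - \delta(F_m)$ ensures $\alpha \leq \lambda(Q_i)$ for every $i$, so Corollary~1 of \cite{alpern2022continuous} guarantees that $S_i$ intercepts any attack at a point of $Q_i$ with probability $\alpha/\lambda(Q_i)$. Splitting according to whether the attacked point $x$ is a node (hence in every $Q_i$) or a regular point (hence removed from exactly one $Q_j$, since each arc belongs to exactly one factor), the same telescoping algebra as in Proposition~\ref{lem:complete} gives interception probability $\frac{z}{z-1}\cdot\frac{\alpha}{\mu}$ in the first case and exactly $\frac{\alpha}{\mu}$ in the second, both at least $\alpha/\mu$. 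Matched against the uniform-attack upper bound from Lemma~\ref{lem:unif}, this forces $V = \alpha/\mu$.

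The main obstacle I expect is verifying the two graph-theoretic prerequisites, parity of $k-m$ for Eulerian-ness and $k-m \geq n$ for connectedness, but both are immediate from the stated hypotheses. In effect the theorem is a clean recognition that the $1$-factor argument of Proposition~\ref{lem:complete} survives verbatim whenever the complements of the factors are connected and have all even degrees.
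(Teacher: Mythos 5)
Your proposal is correct and follows essentially the same route as the paper: the paper likewise forms $Q_i = Q - F_i$, notes that $k-m$ is even and at least $n$ so each $Q_i$ is Eulerian and connected, mixes the Eulerian tours with probabilities proportional to $\lambda(Q_i)$, and then defers to the case analysis of Proposition~\ref{lem:complete} together with the uniform-attack upper bound. You have simply written out the details the paper leaves as ``similarly to the proof of Proposition~\ref{lem:complete}.''
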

\proof
Observe that $|F_m| = k/m= r$. Let $F_m= \{F_1,\ldots, F_r\}$ and $Q_i= Q- F_i$ for $F_i \in F$. Then, for all $i= 1,\ldots, r$, the sub-network $Q_i$ is a $k'$-regular network where $k'= k-m$. Since $k \geq n+m$, we have $k' \geq n = V(Q_i)/2$ and $Q_i$ is connected. Moreover, $k'$ is even because $k$ and $m$ are odd. Therefore, $Q_i$ is Eulerian for all $i= 1, \ldots, r$. 

Let $S_i$ be an Eulerian tour of $Q_i$ which starts at a point chosen randomly. Let $S^{F_m}$ be a patrolling strategy which picks $S_i$ with probability $s_i = \lambda(S_i)/\sum_{j=1}^r \lambda(S_j)$. 
Then, similarly to the proof of Proposition~\ref{lem:complete}, it can be shown that for $\alpha \leq \mu - \delta(F) \leq \min_i \lambda(Q_i)$, the patrol $S^{F_m}$ can intercept any attack with probability at least $\alpha/\mu$ and $V \geq \alpha/\mu$. Since the uniform attack strategy can guarantee $V \leq \alpha/\mu$ \citep{alpern2022continuous}, for $\alpha \leq \mu - \delta(F)$, we conclude the value of the game is $V=\alpha/\mu$, the uniform attack strategy is optimal for the Attacker and the patrol $S^{F_m}$ is optimal for the Patroller.
\endproof

\section{Conclusion} \label{sec:concl}
We have settled a conjecture posed by \cite{alpern2022continuous} and thus shown that for tree networks, an easily implementable patrolling strategy is optimal in the continuous patrolling game. Although we have found $\varepsilon$-optimal attack strategies, we believe that optimal attack strategies exist in all cases, and it may be of interest to refine the tree attack strategy defined in this paper to obtain precisely optimal strategies. 

We have also solved the game for complete networks as long as $\alpha$ is sufficiently small, significantly increasing the range of values of $\alpha$ for which a solution is known.  The solution to the continuous patrolling game remains open for many classes of networks for larger values of $\alpha$.

\section*{Acknowledgements} This material is based upon work supported by the National Science Foundation under Grant No. CMMI-1935826.

\section*{Appendix: proof of Proposition~\ref{prop:tightness}}\label{appendix:A}
\proof Let $w$ be an arbitrary patrol. Let $I_1=[0,6-\alpha]$, $I_2=[6-\alpha, \alpha]$, $I_3=[\alpha,6]$ and $I_4= [6, \infty)$. For $i=1,\ldots,4$, let $P(I_i)$ be the interception probability that $w(I_i)$ contributes to $P(w)$. Since the attack time is chosen uniformly at random in the interval $[0, 6-\alpha]$, all attacks are finished by time $6$ and $P(I_4)=0$. 

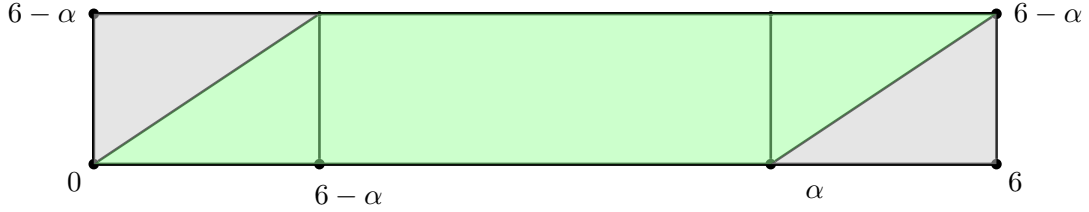
\begin{figure}[!ht]
    \centering
    \begin{tikzpicture}[line width=1pt]
    \tikzstyle{every node}=[draw,circle,fill=black,minimum size=1pt,
                            inner sep=0pt]
    \draw (0,0) node (0)[label=-150:${\ 0\ }$, minimum size=3pt] {}
        -- ++(0:3cm) node (1)[label=-30:${6-\alpha\ }$,  minimum size=3pt]  {}
        -- ++(0:6cm) node (2)[label=-30:${\ \ \ \ \alpha\ }$,  minimum size=3pt]  {}
        -- ++(0:3cm) node (3)[label=-30:${\ 6\ }$,  minimum size=3pt]  {}
        -- (0);
    \draw (0)
       -- ++(90:2cm) node (a)[label=-180:${\ 6-\alpha\ }$,  minimum size=3pt]  {}
       -- (0);
    \draw (a)
       -- ++(0:3cm) node (b)  {}
       -- ++(0:6cm) node (c)  {}
       -- ++(0:3cm) node (d)  [label=0:${\ 6-\alpha\ }$,  minimum size=3pt]{}
       -- (a);
    \draw (b) -- (1);
    \draw (c) -- (2);
    \draw (d) -- (3);
    \draw (0) -- (b);
    \draw (c) -- (b);
    \draw (d) -- (2);
    \fill[black!20, opacity=0.5] (0.center) -- (a.center) -- (b.center) --cycle;
    \fill[black!20, opacity=0.5] (d.center) -- (3.center) -- (2.center) --cycle;
    \fill[green!40, opacity=0.5] (0.center) -- (b.center) -- (d.center)-- (2.center) --cycle;
\end{tikzpicture}
    \caption{The maximum attacks that patrol $w$ can intercept in the time interval $[0,6]$. The starting time of the attack $[0, \alpha-6]$ is shown by the vertical lines. All attacks in the green area can be intercepted and all attacks in the grey area will not be intercepted.}
    \label{fig:2}
\end{figure}

We observe that $P(I_1) \leq (6-\alpha)/12$. Indeed, during $I_1$, the patrol can walk for length at most $6-\alpha$ without any point being revisited (see Figure~\ref{fig:2}) and that walk gives interception probability $(6-\alpha)/(2\mu)= (6-\alpha)/12$. Similarly, we have $P(I_2) \leq (2\alpha-6)/6$ and $P(I_3) \leq (6-\alpha)/12$ (see Figure~\ref{fig:2}). Then,
$$P(w)=P(I_1)+P(I_2)+P(I_3)\leq \frac{6-\alpha}{12}+\frac{2\alpha-6}{6}+\frac{6-\alpha}{12}=\frac{\alpha}{6}.$$
So, $P(w)=\alpha/6$ if and only if all $P(I_i)$ meet their bounds. In other words, in time $[0,6]$ the patrol $w$ must satisfy: (i) the patrol always walks with speed 1, and (ii) if any point $x$ is revisited, then $T_{j+1}(x)-T_j(x)\geq \alpha$ where $T_j(x)$ ($j=1,\ldots$) is the $j^{th}$ time $x$ is visited.

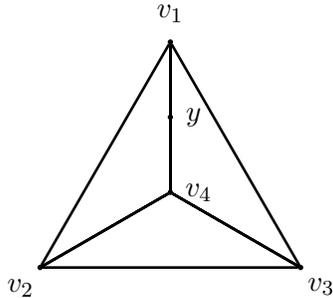
\begin{figure}[!ht]
    \centering
    \begin{tikzpicture}[line width=1pt]
    \tikzstyle{every node}=[draw,circle,fill=black,minimum size=1pt,
                            inner sep=0pt]
    \draw (0,0)[color = black] node (v4)[label=0:${\ v_4\ }$]{}
        -- ++(90:2cm) node (v1)[label=90:${\ v_1\ }$]{}
        -- (v4);
    \draw (v4)[color = black]
       -- ++(-30:2cm) node (v3)[label=-20:${\ v_3\ }$]{}
       -- (v4);
    \draw (v4)[color = black]
       -- ++(-150:2cm) node (v2)[label=-120:${\ v_2\ }$]{}
       -- (v4);
    \draw (v4)[color = black]
       -- ++(90:1cm) node (y)[label=0:${\ y\ }$]{}
       -- (v4);
    \draw [color = black](v2) -- (v3);
    \draw [color = black](v1) -- (v2);
    \draw [color = black](v1) -- (v3);
\end{tikzpicture}
\caption{Unit length arc network $K_4$.}
    \label{fig:Q_4}
\end{figure}

We claim that there is no patrol satisfying both (i) and (ii). First, assume at time 0, the patrol stays at a node. Since $\alpha >4$, $w([0,4])$ must be a path consisting of 4 distinct adjacent arcs. Without loss of generality, we consider 3 possible paths for $w([0,4])$: $w_1=(v_1, v_2, v_3, v_4, v_2)$, $w_2=(v_1, v_2, v_3, v_4, v_1)$, $w_3=(v_1, v_2, v_3, v_1, v_4)$ (see Figure \ref{fig:Q_4}). For $w_1$, to continue, the patrol can go to $v_3$ or $v_1$; however, both ways will immediately violate the condition (ii). For $w_2$, the patrols must continue by going from $v_1$ to $v_3$. Then, at $v_3$, there is no way to continue without violating the condition (ii). With the same analysis, $w_3$ cannot be completed such that the condition (ii) still holds.   

Second, we consider the case that the patrol starts at a regular point $y$. We assume $y \in (v_1, v_4)$ and the patrol first travels from $y$ to $v_1$ at time $t= d(y, v_1) < 1$. Since $t+3 <\alpha$, $w([0,t+3])$ cannot contain the same arc twice. It is enough to examine three possible cases for $w([0,t+3])$: 
\begin{itemize}
    \item Case 1: $w'=(y, v_1, v_2, v_3, v_1)$
    \item Case 2: $w''=(y, v_1, v_2, v_3, v_4)$
    \item Case 3: $w''=(y, v_1, v_2, v_4, v_3)$
\end{itemize}
Similar to the previous analysis, it is easy to see that in all cases condition (ii) cannot be satisfied.

For $\alpha=6$, it is easy to see that $V<1$ since there is no tour which cover all arcs in time $[0,6]$. So, for $4<\alpha$, the attack cannot be intercepted with probability $\alpha/\mu$ and the value of the game is $V< \alpha/\mu$. 
\endproof

\end{document}